\newtheorem{thm}{Theorem}
\newtheorem{rmk}{Remark}
\newtheorem{lemma}{Lemma}
\newtheorem{defi}{Definition}
\newtheorem{cor}{Corollary}
\definecolor{mygrey}{gray}{0.50}
\newcommand{\II}{\mathbf{I}}
\newcommand{\eff}{\text{eff}}
\newcommand{\xx}{\mathbf{x}}
\newcommand{\yy}{\mathbf{y}}
\newcommand{\uu}{\mathbf{u}}
\newcommand{\ww}{\mathbf{w}}
\newcommand{\WW}{\mathbf{W}}
\newcommand{\HH}{\mathbf{H}}
\newcommand{\EE}{\mathbf{E}}
\newcommand{\UU}{\mathbf{U}}
\newcommand{\BB}{\mathbf{B}}
\newcommand{\YY}{\mathbf{Y}}
\newcommand{\XX}{\mathbf{X}}
\newcommand{\FF}{\mathbf{F}}
\newcommand{\RR}{\mathbf{R}}
\newcommand{\SSigma}{\mathbf{\Sigma}}
\newcommand{\nozero}{\backslash\left\{\mathbf{0}\right\}}
\begin{document}

\title{Semantically Secure Lattice Codes \\for Compound MIMO Channels}

\author{Antonio Campello, Cong Ling and Jean-Claude Belfiore\thanks{

This work was presented in part at the International Zurich Seminar on Communications (IZS) 2018 and in part at the International Symposium on Turbo Codes and Iterative Information Processing (ISTC) 2016.

A. Campello is with the Department of Electrical and Electronic Engineering, Imperial College London, London SW7 2AZ, U.K. (e-mail: a.campello@imperial.ac.uk).

C.  Ling  is  with  the  Department  of  Electrical  and  Electronic  Engineering, Imperial College London, London SW7 2AZ, U.K. (e-mail: cling@ieee.org).

J.-C.   Belfiore    is    with the  Mathematical and Algorithmic Sciences Lab, France Research Center,
Huawei Technologies   (e-mail: belfiore@telecom-paristech.fr).

}}

\maketitle
\begin{abstract}
We consider compound multi-input multi-output (MIMO) wiretap channels where minimal channel state information at the transmitter (CSIT) is assumed. Code construction is given for the special case of isotropic mutual information, which serves as a conservative strategy for general cases. Using the flatness factor for MIMO channels, we propose lattice codes universally achieving the secrecy capacity of compound MIMO wiretap channels up to a constant gap (measured in nats) that is equal to the number of transmit antennas. The proposed approach improves upon existing works on secrecy coding for MIMO wiretap channels from an error probability perspective, and establishes information theoretic security (in fact semantic security). We also give an algebraic construction to reduce the code design complexity, as well as the decoding complexity of the legitimate receiver. Thanks to the algebraic structures of number fields and division algebras, our code construction for compound MIMO wiretap channels can be reduced to that for Gaussian wiretap channels, up to some additional gap to secrecy capacity.
\end{abstract}

\section{Introduction}

Due to the open nature of the wireless medium, wireless communications are inherently vulnerable to eavesdropping attacks.
Information theoretic security offers additional protection for wireless data, since it only relies on the physical properties of wireless channels, thus representing a competitive/complementary approach to security compared to traditional cryptography.

The fundamental wiretap channel model was first introduced by Wyner \cite{Wyner75}. In this seminal paper, Wyner defined the secrecy capacity and presented the idea of coset coding to encode both data and random bits to mitigate eavesdropping. In recent years, the quest for the secrecy capacity of many classes of channels has been one of the central topics in wireless communications \cite{BBRM08,Bloch_Barros_2011,Liang_Poor_Shamai_2009,polarsecrecy,BargWiretap,GeneralWiretap,Tyagi15}.

In the information theory community, a commonly used secrecy notion is \emph{strong secrecy}: the mutual information $\mathbb{I}(M;Z^T)$ between the confidential message $M$ and the channel output $Z^T$ should vanish when the code length $T \to \infty$. This common assumption of uniformly distributed messages was relaxed in \cite{Bellare2012}, which considered the concept of \emph{semantic security}: for \textit{any} message distribution, the advantage obtained by an eavesdropper from its received signal vanishes for large block lengths. This notion is motivated by the fact that the plaintext can be fixed and arbitrary.

For the Gaussian wiretap channel, \cite{OSB} introduced the \emph{secrecy gain} of lattice codes while \cite{LLBS_12} proposed semantically secure lattice codes based on the lattice Gaussian distribution. To obtain semantic security, the \emph{flatness factor} of a lattice was introduced in \cite{LLBS_12} as a fundamental criterion which implies that conditional outputs are indistinguishable for different input messages. Using a random coding argument, it was shown that there exist families of lattice codes which are \emph{good for secrecy}, meaning that their flatness factor vanishes. Such families achieve semantic security for rates up to $1/2$ nat from the secrecy capacity.

Compared to the Gaussian wiretap channel, the cases of fading and multi-input multi-output (MIMO) wiretap channels are more technically challenging. The fundamental limits of fading wireless channels with secrecy constraints have been investigated in \cite{BR06,LPS07,BBRM08}, where the achievable rates and the secrecy outage probability were given. The secrecy capacity of the MIMO wiretap channel was derived in \cite{OH11,Khisti10,LiuShamai09,Loyka16Wiretap}, assuming full channel state information at the transmitter (CSIT). A code design in this setting was given in \cite{KhinaKK15} by reducing to scalar Gaussian codes. Although CSIT is sometimes available for the legitimate channel, it is hardly possible that it would be available for the eavesdropping channel. An achievability result was given in \cite{He14} for varying MIMO wiretap channels with no CSI about the wiretapper, under the condition that the wiretapper has less antennas than the legitimate receiver. Schaefer and Loyka \cite{SL15} studied the secrecy capacity of the \textit{compound} MIMO wiretap channel, where a transmitter has no knowledge of the realization of the eavesdropping channel, except that it remains fixed during the transmission block and belongs to a given set (the \textit{compound set}). The compound model represents a well-accepted reasonable approach to information theoretic security, which assumes minimal CSIT of the eavesdropping channel \cite{Liang09,Bjelakovic2013,Khisti11}. It can also model a multicast channel with several eavesdroppers, where the transmitter sends information to all legitimate receivers while keeping it secret from all eavesdroppers \cite{Liang09}.

When it comes to code design for fading and MIMO wiretap channels, an error probability criterion was used in several prior works \cite{BO_TComm,KHHV,KOO15}, while information theoretic security was only addressed recently with the help of flatness factors \cite{Hamed,LVL16}. In particular, \cite{LVL16} established strong secrecy over MIMO wiretap channels for secrecy rates that are within a constant gap from the secrecy capacity.

\subsection{Main Contributions}
In this paper, we propose universal codes for compound Gaussian MIMO wiretap channels that complement the recent work reported in \cite{LVL16}. The key method is discrete Gaussian shaping and a ``direct" proof of the universal flatness of the eavesdropper's lattice. This method is similar to that used in \cite{Our} to approach the capacity of compound MIMO channels so that the present paper can be considered a companion paper of \cite{Our} for wiretap channels. Note that \cite{LVL16} used an ``indirect" proof, which was based on an upper bound on the smoothing parameter in terms of the minimum distance of the dual lattice. Besides considering different channel models (\cite{LVL16} is focused on ergodic stationary channels although it also briefly addresses compound channels), the code constructions of this paper and \cite{LVL16} are also different: the construction of \cite{LVL16} is based on a particular sequence of algebraic number fields with increasing degrees, while the algebraic construction of this work combines algebraic number fields of fixed degree and random error correcting codes of increasing lengths. The proposed construction enjoys a significantly smaller gap to secrecy capacity, as well as lower decoding complexity, than \cite{LVL16}, over compound MIMO wiretap channels.

We focus on a compound channel formed by the set of all matrices with the same white-input capacity (see \eqref{eq:BandE} for the precise model). Our lattice coding scheme universally achieves rates (in nats) up to $(C_b - C_e - n_a)^{+}$, where $C_b$ is the capacity of the legitimate channel, $C_e$ is the capacity of the eavesdropper channel, $n_a$ is the number of transmit antennas and $(x)^+ = \max\left\{x,0\right\}$. We believe the $n_a$-nat gap is an artifact of our proof technique based on the flatness factor, which may be removed by improving the flatness-factor method. This is left as an open problem for future research.

For this special compound model, we also show how to extend the analysis in order to accommodate number-of-antenna mismatch, \emph{i.e.}, security is valid \textit{regardless} of the number of antennas at the eavesdropper\footnote{Previous works \cite{BO_TComm,LVL16} required that the number of the eavesdropper's antennas be greater than or equal to $n_a$.}. This is a very appealing property, since the number of receive antennas of an eavesdropper may be unknown to the transmitter.

We present two techniques to prove universality of the proposed lattice codes. The first technique is based on Construction A (see Sect. \ref{ConstructionA} for the definition) and the usual argument for compound channels \cite{RootVarayia1968,Loyka16Compound}, which combines fine quantization of the channel space with mismatch encoding for quantized states. This method is a generic proof of the existence of good codes which potentially incurs large blocklengths and performance loss. The second technique is based on algebraic lattices and assumes that the codes admit an ``algebraic reduction'' and can absorb the channel state. In fact, any code which is good for the \textit{Gaussian} wiretap channel can be coupled with this second technique, as long as it also possesses an additional algebraic structure (for precise terms see Definition \ref{def:algRed}). It is inspired by previous works on algebraic reduction for fading and MIMO channels \cite{GhayaViterboJC}, \cite{LuzziGoldenCode}, which are revisited here in terms of secrecy.

\subsection{Relation to Previous Works}

An idea of approaching the secrecy capacity of fading wiretap channels using nested lattice codes was outlined in \cite{LingISTC16}. Code construction for compound wiretap channels has been further developed in \cite{CLB-IZS18}, which leads to the current work where proof details are given.

The technique for establishing universality of the codes in \cite{SL15} over the compound MIMO channel with (uncountably) infinite uncertainty sets consists of quantizing the channel space and designing a (random Gaussian) codebook for the quantized channels. This method is similar to the proof of Theorem \ref{thm:achievableRatesFinal} in the present paper.

Compound MIMO channels \textit{without} secrecy constraints have been considered earlier in \cite{RootVarayia1968,Loyka16Compound,ShiWesel07} for random codebooks. Lattice codes are shown to achieve the optimal diversity-multiplexing tradeoff for MIMO channels in  \cite{HeshamElGamal04}. More recently it was proven that precoded integer forcing \cite{Ordentlich15} achieves the compound capacity up to a gap, while algebraic lattice codes \cite{Our} achieve the compound capacity with ML decoding and a gap to the compound capacity of MIMO channels with reduced decoding complexity. As mentioned above, some techniques (generalized Construction A and channel quantization)
of this paper are similar to those used in \cite{Our}.

\subsection{Organization}
The technical content of this paper is organized as follows. In Section \ref{sec:preliminaries} we discuss the main problem and notions of security. In Section \ref{sec:correlated}, we introduce the main notation on lattices and discrete Gaussians, stating generalized versions of known results for correlated Gaussian distributions. In Section \ref{sec:IV} we give an overview of the main coding scheme and analyze the information leakage and reliability. The proof of universality, however, is postponed until Section \ref{sec:universallyFlat}, where we show that lattice codes can achieve vanishing information leakage under semantic security through the two aforementioned techniques. Section \ref{sec:discussion} concludes the paper with a discussion of other compound models and future work.

\subsection{Notation}

Matrices and column vectors are denoted by upper
and lowercase boldface letters, respectively. For a matrix $\mathbf{A}$, its Hermitian transpose, inverse, determinant and trace are denoted by $\mathbf{A}^{\dag}$, $\mathbf{A}^{-1}$, $|\mathbf{A}|$ and $\mathrm{tr}(\mathbf{A})$, respectively. We denote the Frobenius norm of a matrix by $\left\|\mathbf{A} \right\|_F \triangleq \sqrt{\mbox{tr}(\mathbf{A}^\dagger \mathbf{A})}$ and the spectral norm (\emph{i.e.}, $2$-norm) by $\left\|\mathbf{A} \right\| \triangleq \sqrt{\lambda_1}$, where $\lambda_1$ is the largest eigenvalue of $\mathbf{A}^{\dagger} \mathbf{A}$. $\mathbf{I}$ denotes the identity matrix. We write $\mathbf{A}\succeq\mathbf{0}$ for a symmetric matrix $\mathbf{A}$ if it is positive semi-definite. Similarly, we write $\mathbf{A}\succeq\mathbf{B}$
if $(\mathbf{A}-\mathbf{B})\succeq\mathbf{0}$. We use the standard asymptotic notation $%
f\left( x\right) =O\left( g\left( x\right) \right) $ when $\lim\sup_{x\rightarrow
\infty}|f(x)|/g(x) < \infty$ , $%
f\left( x\right) =o\left( g\left( x\right) \right) $ when $\lim_{x\rightarrow
\infty}f(x)/g(x) =0$,
$%
f\left( x\right) =\Omega\left( g\left( x\right) \right) $ when $\lim\inf_{x\rightarrow
\infty}f(x)/g(x) > 0$,
and $%
f\left( x\right) =\omega\left( g\left( x\right) \right) $ when $\lim_{x\rightarrow
\infty}f(x)/g(x) =\infty$. Finally, in this paper, the logarithm is taken with respect to base $e$ (where $e$ is the Neper number) and information is measured in nats.

 \section{Problem Statement}
\label{sec:preliminaries}
Consider the following wiretap model. A transmitter (Alice) sends information through a MIMO channel to a legitimate receiver (Bob) and is eavesdropped by an illegitimate user (Eve). The channel equations for Bob and Eve read:
\begin{equation}\label{eq:block-fading}\begin{split}
\underbrace{\mathbf{Y}_b}_{n_b\times T}&=\underbrace{\mathbf{H}_b}_{n_b\times n_a}\underbrace{\mathbf{X}}_{n_a \times T}+\underbrace{\mathbf{W}_b}_{n_b\times T} \\ \underbrace{\mathbf{Y}_e}_{n_e\times T}&=\underbrace{\mathbf{H}_e}_{n_e\times n_a}\underbrace{\mathbf{X}}_{n_a \times T}+\underbrace{\mathbf{W}_e}_{n_e\times T},
\end{split}
\end{equation}
where $n_a$ is the number of transmit antennas, $n_{b}$ ($n_{e}$, resp.) is the number of receive antennas for Bob (Eve, resp.), $T$ is the coherence time, and $\mathbf{W}_b$ ($\mathbf{W}_e$, resp.) has circularly symmetric complex Gaussian i.i.d. entries with variance $\sigma_b^2$ ($\sigma_e^2$, resp.) per complex dimension. We can vectorize \eqref{eq:block-fading} in a natural way:
\begin{equation}\label{eq:block-fading-vec}\begin{split}
\underbrace{\mathbf{y}_b}_{n_bT \times 1}&=\underbrace{\mathcal{H}_b}_{n_bT \times n_aT}\underbrace{\mathbf{x}}_{n_aT \times 1}+\underbrace{\mathbf{w}_b}_{n_bT\times1} \\ \underbrace{\mathbf{y}_e}_{n_eT \times 1}&=\underbrace{\mathcal{H}_e}_{n_eT \times n_aT}\underbrace{\mathbf{x}}_{n_aT \times 1}+\underbrace{\mathbf{w}_e}_{n_eT\times1},
\end{split}
\end{equation}
where $\mathcal{H}_b$ and $\mathcal{H}_e$ are the block diagonal matrices
$$\mathcal{H}_{b} = \mathbf{I}_{T} \otimes \mathbf{H}_b = \left(\begin{array}{cccc} \HH_b & & & \\ & \HH_b & & \\ & & \ddots & \\ & & & \HH_b \end{array}\right),$$
$$ \mathcal{H}_{e} = \mathbf{I}_{T} \otimes \mathbf{H}_e = \left(\begin{array}{cccc} \HH_e & & & \\ & \HH_e & & \\ & & \ddots & \\ & & & \HH_e \end{array}\right).$$
For convenience, we denote the transmit signal-to-noise ratio (SNR) in Bob and Eve's channels by
\begin{equation*}
\rho_b \triangleq \frac{P}{\sigma_b^2} \mbox{ and } \rho_e \triangleq \frac{P}{\sigma_e^2},
\end{equation*}
respectively, where $P$ is the power constraint, \emph{i.e.}, the transmitted signal satisfies $\mathbb{E}[\mathbf{x}^{\dagger}\mathbf{x}]\leq n_a T P$.

We assume that the channel realizations $(\mathbf{H}_b,\mathbf{H}_e)$ are \textit{unknown} to Alice but belong to a compound set $\mathcal{S}=\mathcal{S}_b \times \mathcal{S}_e \in \mathbb{C}^{n_b \times n_a} \times \mathbb{C}^{n_e \times n_a}$. From the security perspective, we further make the conservative assumption that Eve knows both $\mathbf{H}_b$ and $\mathbf{H}_e$. Under this general scenario the (strong) secrecy capacity is bounded by \cite{SL15}:
$$C_s \geq \max_{\RR} \min_{\HH_b,\HH_e} \left(\log |\II
	+\sigma_b^{-1}\HH_b^\dagger\HH_b \RR | - \log \left|\II
	+\sigma_e^{-1}\HH_e^\dagger\HH_e \RR \right|  \right)^{+},$$
where the minimum is over all realizations in $\mathcal{S}$ and the maximum over the matrices $\mathbf{R} \succeq 0$ such that $\text{tr}(\mathbf{R}) \leq n_a P$. Suppose that $\mathcal{S}_b$ and $\mathcal{S}_e$ are the set of channels with the same isotropic mutual information, \emph{i.e.},
\begin{equation}
\begin{split}
\mathcal{S}_b &= \left\{\HH_b \in \mathbb{C}^{n_b \times n_a } : |\II
+\rho_b \HH_b^\dagger\HH_b |= e^{C_b} \right\}, \\ \mathcal{S}_e &= \left\{\HH_e \in \mathbb{C}^{n_e \times n_a } :\left|\II
+\rho_e\HH_e^\dagger\HH_e \right|= e^{C_e} \right\},
\end{split}
\label{eq:BandE}
\end{equation}
for fixed $C_b,C_e\geq0$. In this case, the bound gives $C_s \geq (C_b - C_e)^{+}$. The worst case is achieved by taking a specific ``isotropic'' realization $\mathbf{H}_b^\dagger \mathbf{H}_b = \alpha_b \II$, $\mathbf{H}_e^\dagger \mathbf{H}_e = \alpha_e \II$, where $\alpha_b$ and $\alpha_e$ are such that $\mathbf{H}_b$ and $\mathbf{H}_e$ belong to $\mathcal{S}_b$ and $\mathcal{S}_e$, respectively. From this we conclude that $C_s = C_b-C_e$. The goal of this paper is to construct universal lattice codes that approach the secrecy capacity $C_s$ with \textit{semantic} security. As a corollary, the semantic security capacity and the strong secrecy capacity of the compound set $\mathcal{S}_b \times \mathcal{S}_e$ coincide.

A practical motivation to consider the compound model \eqref{eq:BandE} is the following. Firstly, notice that the secrecy capacity is the same if we replace the equality in the definition of $\mathcal{S}_b$ and $\mathcal{S}_e$ with upper/lower bounds; more precisely the secrecy capacity of the channel with compound set $\overline{\mathcal{S}}_e \times \overline{\mathcal{S}}_b$, where
\begin{equation}
\begin{split}
\overline{\mathcal{S}}_b &= \left\{\HH_b \in \mathbb{C}^{n_b \times n_a } :|\II
+\rho_b \HH_b^\dagger\HH_b |\geq e^{C_b} \right\},\\ \overline{\mathcal{S}}_e &= \left\{\HH_e \in \mathbb{C}^{n_e \times n_a } :|\II
+\rho_e\HH_e^\dagger\HH_e |\leq e^{C_e} \right\},
\end{split}
\label{eq:BandEbar}
\end{equation}
is the same as for $\mathcal{S}_e \times \mathcal{S}_b$. Note that the sets $\mathcal{S}_b$, $\mathcal{S}_e$ and $\overline{\mathcal{S}}_e$ are compact whereas $\overline{\mathcal{S}}_b$ is not. In other words, universal codes are robust, in the sense that only a lower bound on the legitimate channel capacity and an upper bound on the eavesdropper channel are needed. From the security perspective, this is a safe strategy in the scenario where the capacities are not known precisely. Even if Bob and Eve's channels are random, an acceptable secrecy-outage probability can be guaranteed by setting $C_b$ and $C_e$ properly. Then, the problem still boils down to the design of universal codes for the compound model \eqref{eq:BandE}.

\subsection{Notions of Security}
A secrecy code for the compound MIMO channel can be formally defined as follows.
\begin{defi} An $(R,R^\prime,T)$-secrecy code for a compound MIMO channel with set $\mathcal{S} = \mathcal{S}_b \times \mathcal{S}_e$ consists of
	\begin{itemize}
		\item[(i)] A set of messages $\mathcal{M}_T = \left\{1,\ldots,e^{TR}\right\}$ (the secret message rate $R$ is measured in nats and $e^{TR}$ is assumed to be an integer for convenience).
		\item[(ii)] An auxiliary (not necessarily uniform) source $U$ taking values in $\mathcal{U}_T$ with entropy $R^\prime=H(U)$.
		\item[(iii)] A stochastic encoding function $f_T: \mathcal{M}_T \times  \mathcal{U}_T \to \mathbb{C}^{n_a\times T}$ satisfying the power constraint
		\begin{equation}
			\frac{1}{T}\text{{\upshape tr}}\left(\mathbb{E}\left[f_T(m,U)^{\dagger} f_T(m,U)]\right]\right) \leq n_a P,
		\end{equation}
		for any $m \in \mathcal{M}_T$.
		\item[(iv)] A decoding function $g_T: \mathcal{S}_b \times \mathbb{R}^{n_b \times T} \to \mathcal{M}_T$ with output $\hat{m} = g_T(s_b,\mathbf{Y}_b)$.
	\end{itemize}
\end{defi}
A pair $(s_b,s_e) \in\ \mathcal{S}_b \times \mathcal{S}_e$ is referred to as a \textit{channel state} (or \textit{channel realization}).
To ensure reliability for all channel states we require a sequence of codes whose error probability for message $M$ vanishes uniformly:
\begin{equation} \mathbb{P}_{\text{err}|M} \triangleq \mathbb{P}(\hat{M} \neq M) \to 0, \forall s_b \in \mathcal{S}_b\mbox{, as } T \to \infty.
\label{eq:Reliability1}
\end{equation}
Let $p_M$ be a message distribution over $\mathcal{M}_T$. For strong secrecy, $p_M$ is usually assumed to be uniform; however, this assumption is not sufficient from the viewpoint of semantic security, which is the standard notion of security in modern cryptography.  Let $\mathbf{Y}_{e}$ be the output of the channel to the eavesdropper, who is omniscient.  The following security notions are adapted from \cite{Bellare2012, LLBS_12} and should hold in the limit $T \to \infty$:
\begin{itemize}
	\item \textit{Mutual Information Security (\textsc{MIS})}: Unnormalized mutual information
\begin{equation}\label{eq:MIS}
\mathbb{I}(M; \mathbf{Y}_{e}) \to 0
\end{equation}
	for any message distribution $p_M$ and $\mbox{for \textit{all} }s_e \in \mathcal{S}_e$.
	\item \textit{Semantic Security (\textsc{SemanticS})}: Adversary's advantage
	$$\sup_{f,p_M}\left\{ \max_{m'} \mathbb{P}(f(M) = f(m') | \mathbf{Y}_e) - \max_{m''} \mathbb{P}(f(M) = f(m'')) \right\} \to 0$$
	for any function $f$ from $M$ to finite sequences of bits in $\left\{0,1\right\}^*$, and \textit{all} $s_e \in \mathcal{S}_e$.
	\item \textit{Distinguishing Security (\textsc{DistS})}: The maximum variational distance
	$$\max_{m',m''\in \mathcal{M}_T} \mathbb{V}(p_{\mathbf{Y}_e | m'},p_{\mathbf{Y}_e | m''}) \to 0 \mbox{ for all } s_e \in \mathcal{S}_e.$$
\end{itemize}
We stress that all three notions require a sequence of codes to be \textit{universally} secure for all channel states. Treating these notions as classes, we have the inclusions $\textsc{MIS}\subseteq\textsc{SemanticS}=\textsc{DistS}$, \emph{i.e.}, the sequences of codes satisfying $\textsc{DistS}$ are the same as the ones satisfying $\textsc{SemanticS}$ and also include those satisfying $\textsc{MIS}$ \cite[Prop. 1]{LLBS_12}. Moreover, if in the above notions we require that the convergence rate is $o(1/T)$, the three sets coincide. We thus define universally secure codes as follows.

\begin{defi}\label{def:universal}
	A sequence of codes of rate $R$ is universally secure for the MIMO wiretap channel if for all $(s_b,s_e) \in \mathcal{S}$, it satisfies the reliability condition
	\eqref{eq:Reliability1} and mutual information security \eqref{eq:MIS} uniformly.
\end{defi}

Then, semantic security follows as a corollary, which is a direct consequence of established relations between $\textsc{MIS}$ and $\textsc{SemanticS}$ \cite{Bellare2012}:

\begin{cor}
	The sequence of codes given in Definition \ref{def:universal} is semantically secure for the compound MIMO wiretap channel.
\end{cor}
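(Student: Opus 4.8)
The plan is to derive the corollary as a direct consequence of the equivalence between mutual information security and semantic security, which has already been established in the literature \cite{Bellare2012}. By Definition \ref{def:universal}, a universally secure sequence of codes satisfies, for every channel state $(s_b, s_e) \in \mathcal{S}$, both the reliability condition \eqref{eq:Reliability1} and mutual information security \eqref{eq:MIS}, namely $\mathbb{I}(M; \mathbf{Y}_e) \to 0$ for any message distribution $p_M$ and all $s_e \in \mathcal{S}_e$. The task is therefore to transfer this \textsc{MIS} guarantee, which holds uniformly over the compound set, into the \textsc{SemanticS} guarantee.

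First I would recall the class inclusion stated in the excerpt, $\textsc{MIS} \subseteq \textsc{SemanticS} = \textsc{DistS}$, which follows from \cite[Prop.~1]{LLBS_12} together with \cite{Bellare2012}. The key quantitative ingredient is the standard bound relating the adversary's semantic-security advantage to the mutual information: for a \emph{fixed} channel state $s_e$, one has $\mathsf{Adv}(s_e) \le \varphi\bigl(\mathbb{I}(M; \mathbf{Y}_e)\bigr)$ for a function $\varphi$ with $\varphi(t) \to 0$ as $t \to 0$ (the precise form in \cite{Bellare2012} passes through distinguishing security and a Pinsker-type estimate). Since \eqref{eq:MIS} gives $\mathbb{I}(M; \mathbf{Y}_e) \to 0$, monotonicity of $\varphi$ yields $\mathsf{Adv}(s_e) \to 0$ as $T \to \infty$ for that state.

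The only point requiring care is the \emph{universality} over the compound set $\mathcal{S}_e$: semantic security for the compound channel requires the advantage to vanish for \emph{all} $s_e \in \mathcal{S}_e$. This is where I would be most careful, but it is not a genuine obstacle: the hypothesis of Definition \ref{def:universal} already asserts that \eqref{eq:MIS} holds uniformly, i.e., the same sequence of codes drives $\mathbb{I}(M; \mathbf{Y}_e) \to 0$ simultaneously across the compound set. Applying the state-wise bound $\mathsf{Adv}(s_e) \le \varphi(\mathbb{I}(M;\mathbf{Y}_e))$ at each $s_e$ and invoking uniform convergence of the mutual information then delivers the conclusion for all $s_e \in \mathcal{S}_e$ at once.

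Putting the pieces together, the sequence of codes of Definition \ref{def:universal} is semantically secure for the compound MIMO wiretap channel, which is exactly the statement of the corollary. I expect the entire argument to be short, since the heavy lifting has been done in \cite{Bellare2012, LLBS_12}; the corollary simply packages the definition of universal security with the \textsc{MIS}-to-\textsc{SemanticS} implication, the sole subtlety being the bookkeeping that ensures the implication is applied uniformly over the compound set rather than merely at a single channel realization.
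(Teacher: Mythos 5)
Your proposal is correct and takes essentially the same route as the paper, which derives the corollary directly from the established \textsc{MIS}-to-\textsc{SemanticS} implication of \cite{Bellare2012} (and the inclusion $\textsc{MIS}\subseteq\textsc{SemanticS}=\textsc{DistS}$ from \cite[Prop.~1]{LLBS_12}), with universality over $\mathcal{S}_e$ handled exactly as you do, since Definition~\ref{def:universal} already demands that \eqref{eq:MIS} hold uniformly for all channel states. Your added care about applying the state-wise advantage bound uniformly is sound bookkeeping but matches the paper's (implicit) argument rather than departing from it.
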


In what follows we proceed to construct universally secure codes for the MIMO wiretap channel using lattice coset codes.

\section{Correlated Discrete Gaussian Distributions}
\label{sec:correlated}
 In this subsection, we exhibit essential results and concepts for the definition and analysis of our lattice coding scheme.
\subsection{Preliminary Lattice Definitions}
A (complex) lattice $\Lambda$ with generator matrix $\mathbf{B}_c \in \mathbb{C}^{n \times 2n}$ is a discrete additive subgroup of $\mathbb{C}^n$ given by
\begin{equation}
\Lambda = \mathcal{L}(\mathbf{B}_c) = \left\{ \mathbf{B}_c \xx: \xx \in \mathbb{Z}^{2n} \right\}.
\label{eq:complexLattice}
\end{equation}
A complex lattice has an equivalent real lattice generated by the matrix obtained by stacking real and imaginary parts of matrix $\mathbf{B}_c$:
$$\mathbf{B}_r = \left(\begin{array}{c} \Re(\mathbf{B}_c) \\ \Im(\mathbf{B}_c)\end{array} \right) \in \mathbb{R}^{2n \times 2n}.$$

A \textit{fundamental region} $\mathcal{R}(\Lambda)$ for $\Lambda$ is any interior-disjoint region that tiles $\mathbb{C}^n$ through translates by vectors of $\Lambda$. For any $\yy, \xx \in \mathbb{C}^n$ we say that $\yy = \xx \pmod \Lambda$ iff $\yy - \xx \in \Lambda$. By convention, we fix a fundamental region and denote by $\yy \pmod \Lambda$ the unique representative $\xx \in \mathcal{R}(\Lambda)$ such that $\yy = \xx \pmod \Lambda$. The volume of $\Lambda$ is defined as the volume of a fundamental region for the equivalent real lattice, given by $V(\Lambda) = |\BB_r|.$

Throughout this text, for convenience, we also use the matrix-notation of lattice points. If $\Lambda \subset \mathbb{C}^{nT}$ is a full-rank lattice, the matrix form representation of $\xx=(x_1,\ldots,x_{nT}) \in \Lambda$ is
$$\mathbf{X}= \left( \begin{array}{cccc} x_1 & x_2 & \cdots & x_T \\ x_{T+1} & x_{T+2} & \cdots & x_{2T} \\
x_{2T+1} & x_{2T+2} & \cdots & x_{3T} \\ \vdots & \vdots & \ddots & \vdots \\  x_{(n-1)T+1} & x_{(n-1)T+2} & \cdots & x_{nT}\end{array} \right).$$
The \textit{dual} $\Lambda^*$ of a complex lattice is defined as
$$\Lambda^* = \left\{\xx \in \mathbb{C}^n : \Re\left\langle \xx, \yy\right\rangle \in \mathbb{Z} \mbox{ for all } \yy \in \Lambda\right\}.$$
\subsection{The Flatness Factor} The flatness factor has been introduced in \cite{LLBS_12}, and will be used here to bound the information leakage of information transmission of our coding scheme.

The p.d.f. of the complex Gaussian centered at $\mathbf{c} \in \mathbb{C}^n$ is defined as
$$f_{\sigma,\mathbf{c}}(\xx) = \frac{1}{(\pi \sigma^2)^n} e^{-(\xx-\mathbf{c})^{\dagger} (\xx-\mathbf{c})/\sigma^2}.$$
We write $f_{\sigma,\Lambda}(\xx)$ for the sum of $f_{\sigma,\mathbf{c}}(\xx)$ over $\mathbf{c} \in \Lambda$. The \textit{flatness factor} of a lattice quantifies the distance between $f_{\sigma,\Lambda}(\xx)$ and the uniform distribution over $\mathcal{R}(\Lambda)$ and, as we will see, bounds the amount of leaked information in a lattice coding scheme.
\begin{defi} [Flatness factor for spherical Gaussian distributions]
	For a lattice~$\Lambda$ and a parameter~$\sigma$, the flatness factor
	is defined by:
	\begin{equation*}
	\epsilon_{\Lambda}(\sigma)  \triangleq \max_{\mathbf{x} \in
		\mathcal{R}(\Lambda)}|{
		V(\Lambda)f_{\sigma,\Lambda}(\mathbf{x})-1}|
	\end{equation*}
	where $\mathcal{R}(\Lambda)$ is a fundamental region of $\Lambda$.
\end{defi}

For a complex lattice $\Lambda \subset \mathbb{C}^n$, let $\gamma_{\Lambda}(\sigma) = \frac{
	V(\Lambda)^{\frac{1}{n}}}{\sigma^2}$ be the volume-to-noise ratio (VNR). We recall the formulas of the flatness factor and smoothing parameter, adapted to complex lattices. The flatness factor can be written as \cite[Prop. 2]{LLBS_12}:
\begin{equation}
\epsilon_{\Lambda}(\sigma) =  \left(\frac{\gamma_{\Lambda}(\sigma)}{{\pi}}\right)^{{n}}{
	\Theta_{\Lambda}\left({\frac{1}{\pi\sigma^2}}\right)}-1
=\Theta_{\Lambda^*}\left({{\pi\sigma^2}}\right)-1,
\label{flatness-dual-lattice}
\end{equation}
where $\Theta_{\Lambda}$ is the \textit{theta series} of the lattice $\Lambda$.
\begin{defi} [Smoothing parameter \cite{MR07}] \label{def:smooth}
	For a lattice $\Lambda$ and $\varepsilon > 0$, the smoothing parameter is defined by the function
	$\eta_{\varepsilon}(\Lambda)=\sqrt{2\pi}\sigma$, for the smallest $\sigma>0$ such that
	$\sum_{{\bm{\lambda}^*}\in \Lambda^* \setminus \{\mathbf{0}\}} e^{-\pi^2
		\sigma^2\|{ \bm{\lambda}^*}\|^2}\leq \varepsilon$.
\end{defi}

When we have a correlated Gaussian distribution with covariance matrix $\SSigma$
\begin{equation}\label{eq:corr-Gauss}
f_{\sqrt{\SSigma}, \mathbf{c}}(\mathbf{x}) = \frac{1}{\pi^{n}|\SSigma|} \exp\left\{-(\mathbf{x}-\mathbf{c})^T\SSigma^{-1}(\mathbf{x}-\mathbf{c})\right\},
\end{equation}
the flatness factor is similarly defined.

\begin{defi} [Flatness factor for correlated Gaussian distributions]
	\begin{equation*}
	\epsilon_{\Lambda}(\sqrt{\SSigma})  \triangleq \max_{\mathbf{x} \in
		\mathcal{R}(\Lambda)}|{
		V(\Lambda)f_{\sqrt{\SSigma},\Lambda}(\mathbf{x})-1}|
	\end{equation*}
	where $\mathcal{R}(\Lambda)$ is a fundamental region of $\Lambda$.
\end{defi}

The usual smoothing parameter in Definition \ref{def:smooth} is a scalar. To extend its definition to matrices, we say $\sqrt{2\pi\SSigma} \succeq \eta_{\varepsilon}(\Lambda)$ if $\epsilon_{\Lambda}(\sqrt{\SSigma}) \leq \varepsilon$. This induces a partial order because $\epsilon_{\Lambda}(\sqrt{\SSigma_1}) \leq \epsilon_{\Lambda}(\sqrt{\SSigma_2})$ if $\SSigma_1 \succeq \SSigma_2$.

When $\mathbf{c} = 0$ we ignore the index and write $f_{\sqrt{\SSigma}, \mathbf{0}}(\mathbf{x}) = f_{\sqrt{\SSigma}}(\mathbf{x})$.
For a covariance matrix $\SSigma$ we define the generalized-volume-to-noise ratio as
$$\gamma_{\Lambda}(\sqrt{\SSigma}) =	 \frac{V(\Lambda)^{1/n}}{|\SSigma|^{1/n}}.$$
Clearly, the effect of correlation on the flatness factor may be absorbed if we use a new lattice $\frac{ \sqrt{\SSigma}} {\sigma} \cdot \Lambda$, \emph{i.e.}, 
$\epsilon_{\Lambda}({\sigma}) = \epsilon_{\frac{ \sqrt{\SSigma}}{\sigma} \cdot \Lambda}(\sqrt{\SSigma})$. From this, and from the expression of the flatness factor, we have
\begin{eqnarray*}
  \epsilon_{\Lambda}(\sqrt{\mathbf{\SSigma}}) &=& \frac{V(\Lambda)}{\pi^n |\SSigma|} \sum_{{\bm{\lambda}} \in \mathbf{\Lambda}} e^{-\bm{\lambda}^\dagger \SSigma^{-1} \bm{\lambda}} - 1 \\
   &=& \left(\frac{\gamma_{\sqrt{\SSigma^{-1}} \Lambda}(\sigma^2)}{\pi}\right)^{n} \Theta_{\sqrt{\SSigma^{-1}} \Lambda}\left(\frac{1}{\pi\sigma^2}\right)-1.
\end{eqnarray*}
In our applications, the matrix $\SSigma$ will be determined by the channel realization \eqref{eq:block-fading}. Figure \ref{fig:Flatness} shows the effect of fading on the lattice Gaussian function. A function \eqref{eq:corr-Gauss} which is flat over the Gaussian channel (corresponding to $\SSigma = \II$) need not be flat for a channel in deep fading (corresponding to an ill-conditioned $\SSigma$), in which case an eavesdropper could clearly distinguish one dimension of the signal.
\begin{figure}[!h]
	\centering
	\subfloat[$\SSigma = 0.25 \II$]{\includegraphics[scale=0.6]{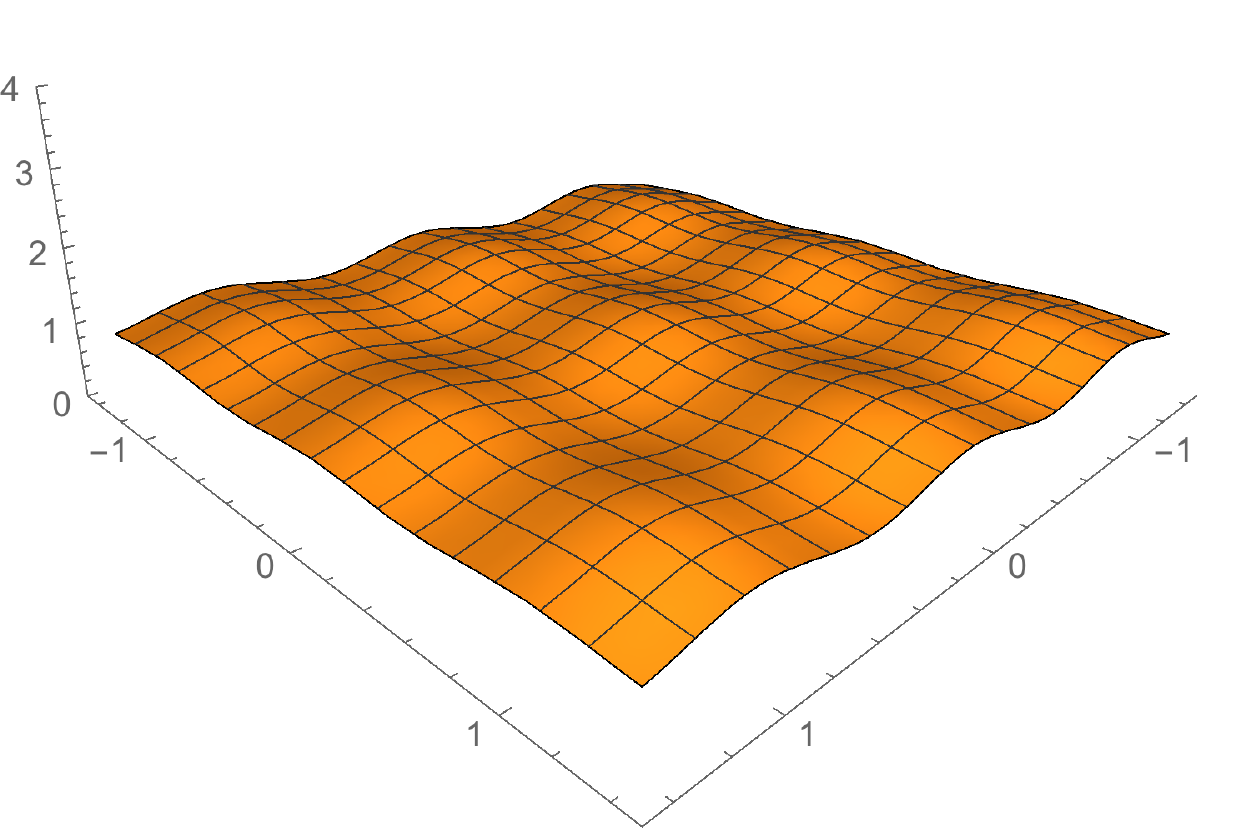}}\hfill
	\subfloat[$\SSigma=0.25\mbox{diag}(6,1/6)$]{\includegraphics[scale=0.55]{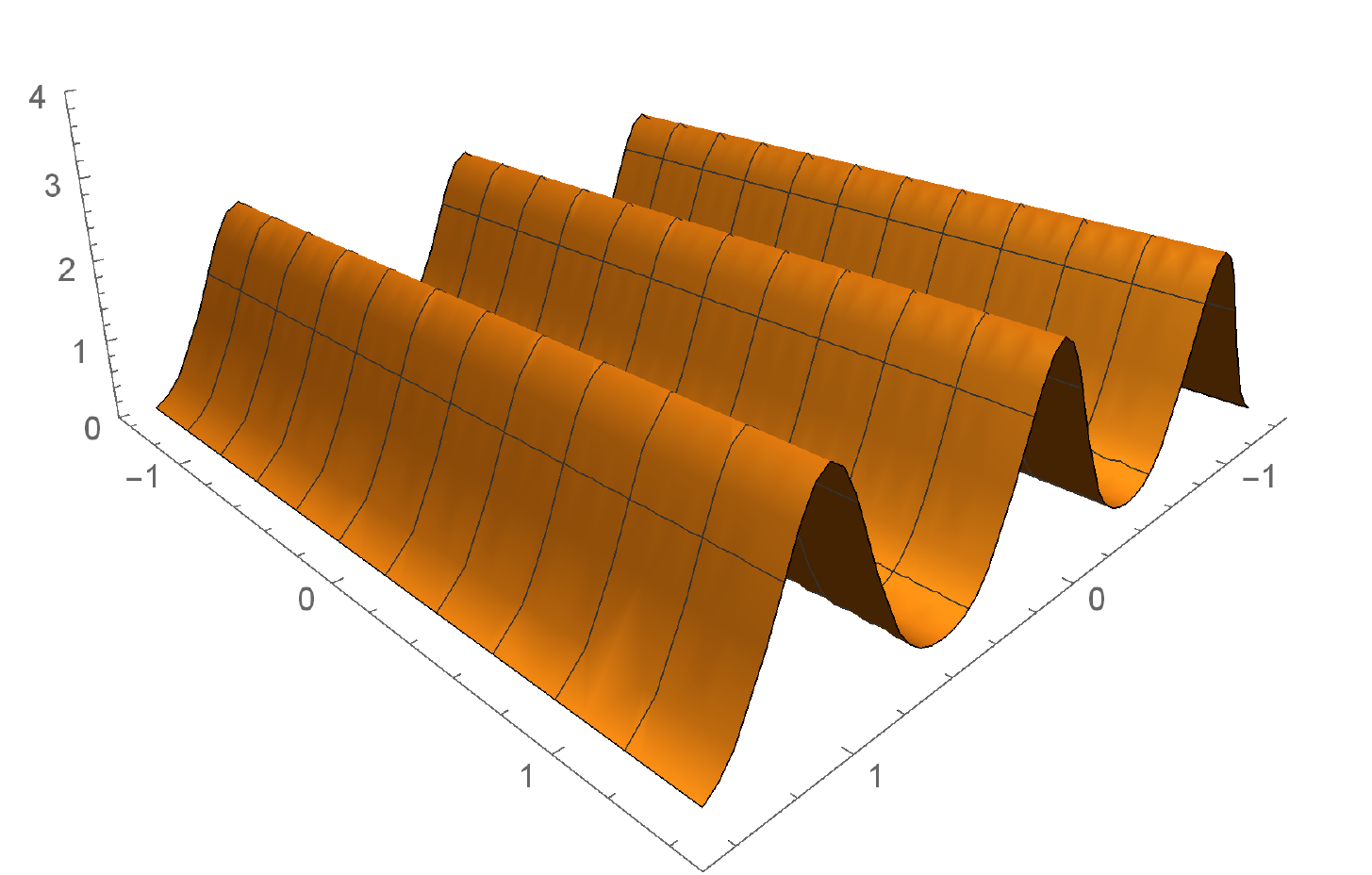}}
	\caption{Illustration of the periodic Gaussian function for the lattice $\mathbb{Z}^2$ and different covariance matrices with same determinant.}
	\label{fig:Flatness}
\end{figure}
\subsection{The Discrete Gaussian Distribution}
In order to define our coding scheme, we need a last element, which is the distribution of the sent signals. To this end, we define the \textit{discrete Gaussian distribution} $\mathcal{D}_{\Lambda+\mathbf{c},\sqrt{\SSigma}}$ as the distribution assuming values on $\Lambda+ \mathbf{c}$, such that the probability of each point $\bm{\lambda} + \mathbf{c}$ is given by
$$\mathcal{D}_{\Lambda+\mathbf{c},\sqrt{\SSigma}}(\bm{\lambda} + \mathbf{c}) = \frac{f_{\sqrt{\SSigma}}({\bm\lambda} + \mathbf{c})}{f_{\sqrt{\SSigma},\Lambda}(\mathbf{c})}.$$

Its relation to the continuous Gaussian distribution can be shown via the smoothing parameter or the flatness factor. For instance, a vanishing flatness factor guarantees that the power per dimension of $\mathcal{D}_{\Lambda+\mathbf{c},\sigma \II}$ is approximately $\sigma^2$ \cite[Lemma 6]{LLBS_12}.

The next proposition says that the sum of a continuous Gaussian and a discrete Gaussian is approximately a continuous Gaussian, provided that the flatness factor is small. The proof can be found in \cite[Appendix I-A]{LVL16}:

%

\begin{lemma}\label{lem:product}
	Given $\mathbf{x}_1$ sampled from the discrete Gaussian distribution $D_{\Lambda+\mathbf{c},\sqrt{\SSigma_1}}$ and $\mathbf{x}_2$ sampled from the continuous Gaussian distribution $f_{\sqrt{\SSigma_2}}$. Let $\SSigma_0 = \SSigma_1 + \SSigma_2$ and let $\SSigma_3^{-1} = \SSigma_1^{-1} +\SSigma_2^{-1}$. If $\sqrt{\SSigma_3} \succeq \eta_{\varepsilon}(\Lambda)$ for $\varepsilon \leq \frac{1}{2}$, then the distribution $g$ of $\mathbf{x}=\mathbf{x}_1+\mathbf{x}_2$ is close to $f_{\sqrt{\SSigma_0}}$:
	\[
	g(\mathbf{x}) \in f_{\sqrt{\SSigma_0}}(\mathbf{x})\left[ {1-4\varepsilon}, 1+4\varepsilon \right].
	\]
\end{lemma}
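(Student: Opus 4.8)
The plan is to write the density of $\mathbf{x}=\mathbf{x}_1+\mathbf{x}_2$ explicitly as a lattice sum and reduce everything to two applications of the flatness factor. By the definition of the discrete Gaussian and independence of $\mathbf{x}_1,\mathbf{x}_2$, the density is
$$g(\mathbf{x}) = \frac{1}{f_{\sqrt{\SSigma_1},\Lambda}(\mathbf{c})}\sum_{\bm{\lambda}\in\Lambda} f_{\sqrt{\SSigma_1}}(\bm{\lambda}+\mathbf{c})\, f_{\sqrt{\SSigma_2}}(\mathbf{x}-\bm{\lambda}-\mathbf{c}),$$
so the crux is to understand the product of the two continuous Gaussians inside the sum.

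First I would establish the Gaussian product identity
$$f_{\sqrt{\SSigma_1}}(\mathbf{v})\,f_{\sqrt{\SSigma_2}}(\mathbf{x}-\mathbf{v}) = f_{\sqrt{\SSigma_0}}(\mathbf{x})\, f_{\sqrt{\SSigma_3},\mathbf{m}}(\mathbf{v}), \qquad \mathbf{m}=\SSigma_3\SSigma_2^{-1}\mathbf{x},$$
valid for every $\mathbf{v}$. This is obtained by expanding the two quadratic exponents, collecting the terms quadratic in $\mathbf{v}$ (whose matrix is precisely $\SSigma_1^{-1}+\SSigma_2^{-1}=\SSigma_3^{-1}$) and completing the square in $\mathbf{v}$ about $\mathbf{m}$. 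Matching the residual quadratic form in $\mathbf{x}$ requires the algebraic identity $\SSigma_2^{-1}-\SSigma_2^{-1}\SSigma_3\SSigma_2^{-1}=\SSigma_0^{-1}$, which follows from the push-through relation $\SSigma_3=\SSigma_1(\SSigma_1+\SSigma_2)^{-1}\SSigma_2$; matching the normalising constants requires the determinant identity $|\SSigma_1|\,|\SSigma_2|=|\SSigma_0|\,|\SSigma_3|$, which follows from the same relation. These non-commuting matrix manipulations are the one place that needs care, and I expect them to be the main (though purely mechanical) obstacle.

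Substituting $\mathbf{v}=\bm{\lambda}+\mathbf{c}$ and pulling the $\bm{\lambda}$-independent factor $f_{\sqrt{\SSigma_0}}(\mathbf{x})$ out of the sum gives
$$g(\mathbf{x}) = f_{\sqrt{\SSigma_0}}(\mathbf{x})\,\frac{\sum_{\bm{\lambda}\in\Lambda} f_{\sqrt{\SSigma_3}}(\bm{\lambda}+\mathbf{c}-\mathbf{m})}{\sum_{\bm{\lambda}\in\Lambda} f_{\sqrt{\SSigma_1}}(\bm{\lambda}+\mathbf{c})}.$$
Both numerator and denominator are periodic (lattice-summed) Gaussians, which the flatness factor controls: after multiplying by $V(\Lambda)$, each lies in $[1-\epsilon_\Lambda(\cdot),1+\epsilon_\Lambda(\cdot)]$ uniformly over the shift. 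The hypothesis $\sqrt{\SSigma_3}\succeq\eta_\varepsilon(\Lambda)$ bounds by $\varepsilon$ the flatness factor governing the numerator. For the denominator I would invoke the monotonicity of the flatness factor recorded earlier: since $\SSigma_3^{-1}=\SSigma_1^{-1}+\SSigma_2^{-1}\succeq\SSigma_1^{-1}$ we have $\SSigma_1\succeq\SSigma_3$, hence $\epsilon_\Lambda(\sqrt{\SSigma_1})\le\epsilon_\Lambda(\sqrt{\SSigma_3})\le\varepsilon$.

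Finally I would combine the two bounds. The ratio lies in $\left[\tfrac{1-\varepsilon}{1+\varepsilon},\tfrac{1+\varepsilon}{1-\varepsilon}\right]$, and for $\varepsilon\le\tfrac12$ one checks $\tfrac{1-\varepsilon}{1+\varepsilon}\ge 1-2\varepsilon\ge 1-4\varepsilon$ and $\tfrac{1+\varepsilon}{1-\varepsilon}\le 1+4\varepsilon$, yielding $g(\mathbf{x})\in f_{\sqrt{\SSigma_0}}(\mathbf{x})[1-4\varepsilon,1+4\varepsilon]$ as claimed. Once the product identity is in place, the flatness-factor bounds and this scalar estimate are routine.
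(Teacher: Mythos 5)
Your proposal is correct and follows essentially the same route as the proof the paper relies on (it defers to \cite[Appendix I-A]{LVL16}): write $g$ as a lattice convolution, apply the Gaussian product identity $f_{\sqrt{\SSigma_1}}(\mathbf{v})f_{\sqrt{\SSigma_2}}(\mathbf{x}-\mathbf{v}) = f_{\sqrt{\SSigma_0}}(\mathbf{x})f_{\sqrt{\SSigma_3},\mathbf{m}}(\mathbf{v})$, and control numerator and denominator by the flatness factor, the denominator via the monotonicity $\SSigma_1 \succeq \SSigma_3 \Rightarrow \epsilon_{\Lambda}(\sqrt{\SSigma_1}) \leq \epsilon_{\Lambda}(\sqrt{\SSigma_3}) \leq \varepsilon$ stated in Section \ref{sec:correlated}. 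Your matrix identities ($\SSigma_3 = \SSigma_1\SSigma_0^{-1}\SSigma_2$, $\SSigma_2^{-1}-\SSigma_2^{-1}\SSigma_3\SSigma_2^{-1}=\SSigma_0^{-1}$, $|\SSigma_1||\SSigma_2|=|\SSigma_0||\SSigma_3|$) and the final scalar bound $\frac{1+\varepsilon}{1-\varepsilon}\leq 1+4\varepsilon$ for $\varepsilon\leq\frac{1}{2}$ all check out.
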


\section{Coding Scheme and Analysis}
\label{sec:IV}
\subsection{Overview}
\label{sec:overview}
Given a pair of nested lattices $\Lambda_e^T \subset \Lambda_{b}^T \subset \mathbb{C}^{n_aT}$ such that
$$\frac{1}{T} \log |\Lambda_b^T/\Lambda_e^T| = R,$$
the transmitter maps a message $m$ to a coset of $\Lambda_e^T$ in quotient $\Lambda_b^T/\Lambda_e^T$, then samples a point from that coset. Concretely, one can use a a one-to-one map $\phi$ such that $\phi(m) = \bm{\lambda}_m$, where $\bm{\lambda}_m$ is a representative of the coset and then samples the signal $\mathbf{x} \sim \mathcal{D}_{\Lambda_e^T+\bm{\lambda}_m,\sigma_s},$ broadcasting it to the channels. A block diagram for the transmission until the front-end receivers Bob and Eve is depicted in Figure \ref{fig:wiretapa}.

In order to find pairs of sequences of nested lattices $\Lambda_b^T$ and $\Lambda_e^T$ we employ constructions of lattices from error-correcting codes. The analysis and full construction are explained in Section \ref{sec:universallyFlat}. Essentially, the lattice $\Lambda_b^T$ controls reliability and has to be chosen in such a way that it is \textit{universally good} for the legitimate compound channel. The lattice $\Lambda_e^T$ controls the information leakage to the eavesdropper, and has to be chosen in such a way that the flatness factor vanishes universally for any eavesdropper realization (universally good for secrecy). The main result of this section is the following theorem, stating the existence of schemes with vanishing probability of error and vanishing information leakage for all pairs of realizations in the compound set $\mathcal{S}_b \times \mathcal{S}_e$.

\begin{figure*}[!htb]
	\centering
	\subfloat[Block diagram of the wiretap coding scheme.]{\includegraphics[scale=0.6]{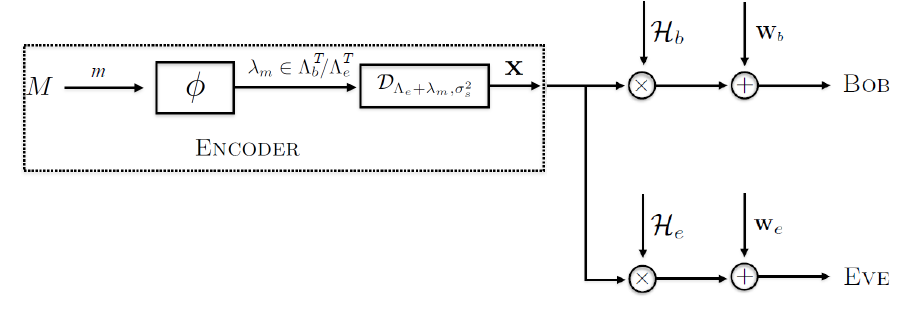}\label{fig:wiretapa}}
	\\
	\subfloat[Block diagram of Bob's receiver, where $\mathbf{F}_b$ is the \textsc{MMSE-GDFE} matrix and $\mathbf{R}_b^{-1}$ is the inverse linear operator that maps cosets of $\RR_b\Lambda_b^T/\RR_b\Lambda$ into cosets of $\Lambda_b^T/\Lambda_e^T$.]{\includegraphics[scale=0.6]{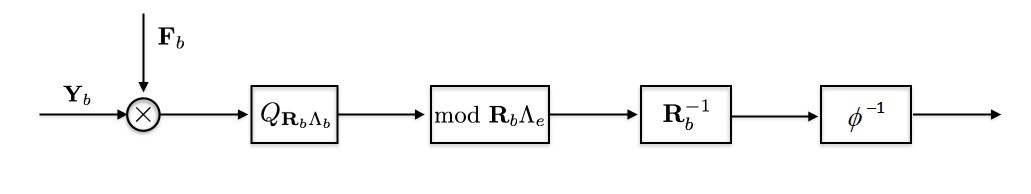}\label{fig:wiretapb}}
	\caption{Encoding and decoding over the compound wiretap channel.}	\label{fig:wiretap}
\end{figure*}

\begin{thm}
	There exists a sequence of pairs of nested lattices $(\Lambda_b^T,\Lambda_e^T)_{T=1}^{\infty}$, $\Lambda_b^T \subset\Lambda_e^T \subset \mathbb{C}^{n_aT}$ such that as $T\to \infty$, the lattice coding scheme \textit{universally} achieves any secrecy rate
	$$R < (C_b - C_e - n_a)^{+}.$$
	\label{thm:achievableRatesFinal}
\end{thm}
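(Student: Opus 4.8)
The plan is to realize the nested coset scheme of Section~\ref{sec:overview} with a random Construction-A ensemble for a pair of lattices $\Lambda_e^T\subseteq\Lambda_b^T\subset\mathbb{C}^{n_aT}$, and to decouple the two design goals: the fine lattice $\Lambda_b^T$ carries the transmitted points and controls Bob's error probability, the coarse lattice $\Lambda_e^T$ performs the randomization and controls Eve's leakage, and the secrecy rate is $R=\frac{1}{T}\log|\Lambda_b^T/\Lambda_e^T|$. I would pick the two sublattice volumes so that, per dimension, the volume-to-noise ratio of $\Lambda_b^T$ sits just above the reliability (Poltyrev) threshold of Bob's effective channel while that of $\Lambda_e^T$ sits just below the flatness threshold of Eve's effective channel; the log-volume difference between the two then yields a rate of $C_b-C_e$ minus a fixed per-dimension penalty that I expect to total the claimed $n_a$ nats. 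For a single fixed pair $(\mathbf{H}_b,\mathbf{H}_e)$ this is a routine existence argument, so the real content is making one sequence of lattices work simultaneously for \emph{all} realizations.

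For reliability I would analyze Bob's receiver of Figure~\ref{fig:wiretapb}. Applying the MMSE-GDFE front-end $\mathbf{F}_b$ turns the vectorized channel \eqref{eq:block-fading-vec} into an effective channel with (asymptotically) white Gaussian noise, over which a lattice that has absorbed the channel through $\mathbf{R}_b$ and is good for the AWGN channel decodes reliably as long as the coding rate stays below the white-input capacity $\log|\mathbf{I}+\rho_b\mathbf{H}_b^\dagger\mathbf{H}_b|=C_b$, equivalently as long as the volume-to-noise ratio of $\Lambda_b^T$ exceeds the AWGN-goodness threshold. Lattice decoding then returns the transmitted point of $\Lambda_b^T$, and reducing it modulo $\Lambda_e^T$ recovers the coset, hence the message, with error probability vanishing in $T$; this is the mechanism also used in the companion paper~\cite{Our}.

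For leakage I would bound $\mathbb{I}(M;\mathbf{Y}_e)$ by the flatness factor of $\Lambda_e^T$. Since $\mathbf{x}\sim\mathcal{D}_{\Lambda_e^T+\bm{\lambda}_m,\sigma_s}$ and $\mathbf{y}_e=\mathcal{H}_e\mathbf{x}+\mathbf{w}_e$, Eve's conditional output is, after the appropriate whitening, the sum of a discrete Gaussian over the coset and a continuous Gaussian, so Lemma~\ref{lem:product} shows that $p(\mathbf{y}_e\mid m)$ lies within a factor $1\pm4\varepsilon$ of a single continuous Gaussian that does \emph{not} depend on the coset representative $\bm{\lambda}_m$, provided $\epsilon_{\Lambda_e^T}(\sqrt{\SSigma_e})\le\varepsilon$ for the effective covariance $\SSigma_e$ built from $\sigma_s$, $\mathbf{H}_e$ and $\sigma_e$ (the channel being absorbed into the lattice via the identity $\epsilon_{\Lambda}(\sqrt{\SSigma})=\epsilon_{\sqrt{\SSigma^{-1}}\Lambda}(\cdot)$ of Section~\ref{sec:correlated}). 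This gives a distinguishing-security, hence mutual-information, bound controlled by $\varepsilon$ as in \cite{LLBS_12}, and requiring $\varepsilon=o(1/T)$ upgrades the guarantee to semantic security. The rate penalty is exactly the per-real-dimension gap between the AWGN-goodness threshold and the flatness threshold, namely $\tfrac12\log e=\tfrac12$ nat, accumulated over the $2n_a$ real dimensions per channel use, which is the origin of the $n_a$-nat gap.

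The hard part will be universality: one coarse lattice $\Lambda_e^T$ must keep a small flatness factor under \emph{every} covariance $\SSigma_e$ arising from $\overline{\mathcal{S}}_e$, not merely for each $\SSigma_e$ separately, and likewise $\Lambda_b^T$ must decode for every $\mathbf{H}_b\in\mathcal{S}_b$. I would cover the compact set $\mathcal{S}_b\times\overline{\mathcal{S}}_e$ by a finite $\delta$-net, use a Minkowski--Hlawka-type averaging over the Construction-A ensemble to exhibit a single pair that is simultaneously good at every net point (the net size grows only polynomially, so the union bound is harmless), and then extend from net points to arbitrary realizations by Lipschitz continuity of the effective volume-to-noise ratio and of the flatness factor in the channel matrix (spectral/Frobenius-norm perturbation bounds). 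The binding realizations are the most anisotropic (ill-conditioned) ones, where a fixed lattice's flatness factor is largest, as Figure~\ref{fig:Flatness} illustrates; controlling $\epsilon_{\Lambda_e^T}(\sqrt{\SSigma_e})$ uniformly against such $\SSigma_e$ is the crux, and it is precisely this quantization-plus-continuity machinery (deferred to Section~\ref{sec:universallyFlat}, with an algebraic alternative that absorbs the state instead of quantizing) that does the work.
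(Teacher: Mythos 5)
Your proposal is correct and follows essentially the same route as the paper: discrete Gaussian shaping over a nested Construction-A pair, the flatness factor of $\Lambda_e^T$ combined with Lemma~\ref{lem:product} to bound leakage (VNR below $\pi$), MMSE-GDFE filtering with an AWGN-good $\Lambda_b^T$ for reliability (VNR above $\pi e$, which is exactly where the $n_a$-nat gap arises, as you note), and universality via a $\delta$-net over the compact channel set with Minkowski--Hlawka averaging, a union bound whose net size is independent of $T$, and continuity of the flatness factor in the covariance --- precisely the quantization machinery of Section~\ref{sec:universallyFlat} and Appendices~\ref{app:2} and~\ref{ref:appSimultaneous}, including the simultaneous-goodness argument for the nested pair. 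You even state the nesting in the correct direction ($\Lambda_e^T\subseteq\Lambda_b^T$, as in Section~\ref{sec:overview}), and your only gloss --- treating Bob's effective noise as asymptotically white Gaussian --- is exactly what the paper's sub-Gaussian lemma makes rigorous, needed because the message distribution may be nonuniform.
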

Moreover, we show that both the probability of error and information leakage in Theorem \ref{thm:achievableRatesFinal} vanishes uniformly for all realizations.

\subsection{The Eavesdropper Channel: Security}
For a \textit{fixed} realization $\mathbf{H}_e$, the key element for bounding the information leakage is the following lemma \cite[Lem 2]{LLBS_12}:
\begin{lemma} Suppose that there exists a probability density function $q$ taking values in $\mathbb{C}^{n_e\times T}$ such that $\mathbb{V}(p_{\mathbf{Y}_e|m}, q_{\mathbf{Y}_e}) \leq \varepsilon_T$ for all $m \in \mathcal{M}_T$. Then, for all message distributions, the information leakage is bounded as:
\begin{equation}
\mathbb{I}(M; \mathbf{Y}_e) \leq 2 n_e T \varepsilon_T R - 2 \varepsilon_T \log 2 \varepsilon_T.
\label{eq:leakage}
\end{equation}
\end{lemma}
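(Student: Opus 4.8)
The plan is to reduce the leakage bound to a continuity statement for entropy on the \emph{finite} message alphabet $\mathcal{M}_T$. The hypothesis gives that every conditional output density $p_{\mathbf{Y}_e\mid m}$ lies within $\varepsilon_T$ of one common reference $q$; the first task is to turn this into closeness between the joint law of $(M,\mathbf{Y}_e)$ and the product of its marginals, so that the near-independence of $M$ and $\mathbf{Y}_e$ can be exploited.

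First I would bound the distance from each conditional law to the true mixture output $p_{\mathbf{Y}_e}=\sum_m p_M(m)\,p_{\mathbf{Y}_e\mid m}$. By the triangle inequality, together with the fact that $p_{\mathbf{Y}_e}$ is itself an average of densities each within $\varepsilon_T$ of $q$,
\[ \mathbb{V}(p_{\mathbf{Y}_e\mid m},p_{\mathbf{Y}_e}) \le \mathbb{V}(p_{\mathbf{Y}_e\mid m},q)+\mathbb{V}(q,p_{\mathbf{Y}_e}) \le 2\varepsilon_T \]
for every $m$. Averaging over $m$, and using that the joint-versus-product variational distance equals the $p_M$-average of these conditional distances, gives $\mathbb{V}(p_{M,\mathbf{Y}_e},\,p_M p_{\mathbf{Y}_e})\le 2\varepsilon_T$. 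Crucially this bound is \emph{uniform} over message distributions $p_M$, which is exactly what semantic security requires.

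Next I would write the leakage through the discrete variable, $\mathbb{I}(M;\mathbf{Y}_e)=H(M)-H(M\mid\mathbf{Y}_e)$, deliberately avoiding the differential entropy of the continuous output. Under the product law $M$ and $\mathbf{Y}_e$ are independent and this quantity vanishes, so $\mathbb{I}(M;\mathbf{Y}_e)$ is precisely the change in $H(M\mid\mathbf{Y}_e)$ induced by replacing the product law with the true joint law. Since the two laws are $2\varepsilon_T$-close in total variation and $M$ ranges over $\mathcal{M}_T$ with $\log|\mathcal{M}_T|=TR$, a Csisz\'ar--K\"orner-type continuity estimate for (conditional) entropy yields $\mathbb{I}(M;\mathbf{Y}_e)\le 2\varepsilon_T\log|\mathcal{M}_T|-2\varepsilon_T\log 2\varepsilon_T = 2\varepsilon_T TR - 2\varepsilon_T\log 2\varepsilon_T$. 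Because $n_e\ge 1$, this is dominated by the claimed $2n_eT\varepsilon_T R-2\varepsilon_T\log 2\varepsilon_T$, so the stated inequality follows (the extra factor $n_e$ is a harmless loosening, and monotonicity of $\delta\mapsto \delta\log|\mathcal{M}_T|-\delta\log\delta$ in the relevant range justifies substituting the upper bound $\delta=2\varepsilon_T$). One must also check that $2\varepsilon_T$ lies below the threshold (typically $\tfrac12$) where the continuity estimate is valid, which is automatic in the regime $\varepsilon_T\to 0$.

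The step I expect to be the main obstacle is the continuity estimate itself, because relative entropy --- and hence mutual information --- is \emph{not} continuous with respect to total variation in general: a small variational distance need not force a small divergence, and no reverse Pinsker inequality is available for the continuous output $\mathbf{Y}_e$. The identity $\mathbb{I}(M;\mathbf{Y}_e)=H(M)-H(M\mid\mathbf{Y}_e)$ is what rescues the argument, since it confines the perturbed entropy to the finite alphabet $\mathcal{M}_T$, on which entropy \emph{is} uniformly continuous in the distribution and contributes exactly the $\log|\mathcal{M}_T|$ factor together with the $-2\varepsilon_T\log 2\varepsilon_T$ remainder. The remainder of the work is bookkeeping: confirming that the mixture $p_{\mathbf{Y}_e}$ inherits the $\varepsilon_T$-closeness, that the averaging commutes with the variational bound, and that the constants of the adopted continuity lemma reproduce the stated coefficients.
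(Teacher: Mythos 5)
Your proof is correct and takes essentially the same route as the paper's own treatment: the paper states this lemma by citation to \cite[Lem.~2]{LLBS_12}, whose proof is exactly your argument --- triangle inequality through the mixture output to get $\mathbb{V}(p_{\mathbf{Y}_e|m},p_{\mathbf{Y}_e})\le 2\varepsilon_T$, then Csisz\'ar's almost-independence (entropy-continuity) bound confined to the finite message alphabet, giving $2\varepsilon_T TR-2\varepsilon_T\log 2\varepsilon_T$. You also correctly diagnosed that the extra factor $n_e\ge 1$ in the stated bound is a harmless loosening of this estimate.
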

We will show that if the distribution is sufficiently flat, then $\mathbf{Y}_e| m$ is statistically close to a multivariate Gaussian for any $m \in \mathcal{M}_T$. Let us assume for now that $\HH_e$ is an invertible square matrix (we next show how to reduce the other cases to this one). In this case, given a message $m$, we have
$$\mathcal{H}_e \mathbf{x} \sim \mathcal{D}_{\mathcal{H}_e(\Lambda_e^T + \bm{\lambda}_m), \sqrt{(\mathcal{H}_e\mathcal{H}_e^{\dagger}) \sigma_s^2}}.$$

According to Lemma \ref{lem:product}, the distribution of $\mathcal{H}_e \mathbf{x} + \mathbf{w}_e$ is within variational distance $4 \varepsilon_T$ from the normal distribution $\mathcal{N}(0,\sqrt{\SSigma_0})$, where $\varepsilon_T = \varepsilon_{\mathcal{H}_e \Lambda_e^T}(\sqrt{\SSigma_3})$ and
\begin{equation}\SSigma_0 = (\mathcal{H}_e\mathcal{H}_e^{\dagger}) \sigma_s^2 + \sigma_e^2 \mathbf{I}, \ \ \SSigma_3^{-1} =  (\mathcal{H}_e\mathcal{H}_e^{\dagger})^{-1} \sigma_s^{-2} + \sigma_e^{-2} \mathbf{I}.
\label{eq:sigma3}
\end{equation}

We thus have the following bound for the information leakage (\eqref{eq:leakage} with $\varepsilon_T$ replaced by $4\varepsilon_T$):
\begin{equation}
\mathbb{I}(M; \mathbf{Y}_e) \leq 8 n_e T \varepsilon_T R - 8 \varepsilon_T \log 8 \varepsilon_T.
\label{eq:infoLeakage}
\end{equation}	
	
Therefore, if $\varepsilon_T=\varepsilon_{\mathcal{H}_e \Lambda_e^T}(\sqrt{\SSigma_3}) = o(1/T)$, the leakage vanishes as $T$ increases \textit{for the specific realization $\mathcal{H}_e$.} To achieve strong secrecy universally, we must, however, ensure the existence of a lattice with vanishing flatness factor for \textit{all} possible $\mathbf{\SSigma}_3$. We postpone the universality discussion to Section \ref{sec:universallyFlat} where it is proven that a vanishing flatness factor is possible simultaneously for all $\HH_e \in \mathcal{S}_e$ and $\gamma_{\mathcal{H}_e \Lambda_e^T}(\sqrt{\SSigma_3}) < \pi$. This condition implies that semantic security is possible for any VNR,
\begin{equation} \gamma_{\mathcal{H}_e \Lambda_e^T}(\sqrt{\SSigma_3}) = \frac{|\mathcal{H}_ e^\dagger \mathcal{H}_e|^{1/n_a T}  {V(\Lambda_e^T)^{1/n_a T}}}{ |\SSigma_3|^{1/n_a T}} < \pi,
\label{eq:Secrecy}
\end{equation}
\begin{equation}\label{eq:vole}
V(\Lambda_e^T)^{1/n_aT} < \left| \II + \rho_e \HH_e^{\dagger} \HH_e\right|^{-1/n_a} \pi\sigma_s^2 = (\pi \sigma_s^2)e^{-C_e/n_a}.
\end{equation}
\\
\textbf{Number-of-Antenna Mismatch.} The above analysis assumed that $n_e = n_a$, \emph{i.e.}, the number of eavesdropper receive antennas is \textit{equal} to the the number of transmit antennas. Although analytically simpler, this assumption is not reasonable in practice, since we expect a compound scheme to perform well for any number of eavesdropper antennas. We show next how to reduce the other cases to the square case.

(i) $n_e < n_a$: Recall that the signal received by the eavesdropper is given in matrix form by
$$\mathbf{Y}_e = \mathbf{H}_e \mathbf{X}+ \mathbf{W}_e.$$
Let $\tilde{\mathbf{H}}_e \in \mathbb{C}^{(n_a-n_e) \times n_a}$ be a completion of $\HH_e$ such that
$$\overline{\HH}_e = \left(\begin{array}{c} \mathbf{H}_e \\ \beta \tilde{\mathbf{H}}_e \end{array}\right),$$
is a full-rank sqaure matrix and $\beta > 0$ is some small number. Let $\tilde{\mathbf{W}}_e \in \mathbb{C}^{(n_a-n_e)\times T}$ be a matrix corresponding to circularly symmetric Gaussian noise. Consider the following surrogate MIMO channel:
$$\left(\begin{array}{c} \mathbf{Y}_e \\ \tilde{\mathbf{Y}}_e \end{array}\right) = \left(\begin{array}{c} \mathbf{H}_e \\ \beta \tilde{\mathbf{H}}_e \end{array}\right) \mathbf{X}+ \ \left(\begin{array}{c} \mathbf{W}_e \\ \tilde{\mathbf{W}}_e \end{array}\right),$$
where $\tilde{\HH}_e$ is scaled so that the capacity of the new channel is arbitrarily close to the original one. Indeed for any full rank completion $\tilde{\HH}_e$, from the matrix determinant lemma, we have
\begin{equation}
|\II + \rho_e \overline{\HH}_e^\dagger \overline{\HH}_e| = |\II + \rho_e {\HH}_e^\dagger {\HH}_e| \times |\II + \beta^2 \tilde{\HH}_e (|\II + \rho_e {\HH}_e^\dagger {\HH}_e|)^{-1}\tilde{\HH}_e^{\dagger}| \geq e^{C_e}.
\end{equation}
Therefore, by letting $\beta \to 0$, the left-hand side tends to $e^{C_e}$. For any signal $\XX$, the information leakage of the surrogate channel is strictly greater than the original one. Indeed, the the eavesdropper's original channel is stochastically degraded with respect to the augmented one, thus $\mathbb{I}(M; (\mathbf{Y}_e,\tilde{\mathbf{Y}}_e)) \geq \mathbb{I}(M; \mathbf{Y}_e).$
A universally secure code for the $n_a \times n_a$ MIMO compound channel will have vanishing information leakage for the surrogate $n_a \times n_a$ channel (for \textit{any} completion) and therefore will also be secure for the original $n_e \times n_a$ channel.


(ii) $n_e > n_a$: Performing a rectangular $QR$ factorization of $\HH_{e}$ we have:
$$\HH_{e} = \mathbf{Q} \left(\begin{array}{c} \hat{\RR} \\ \mathbf{0} \end{array}\right),$$
where ${\mathbf{Q}} \in \mathbb{C}^{n_e \times n_e}$ and $\hat{\mathbf{R}} \in \mathbb{C}^{n_a \times n_a}$ are square matrices. Therefore the eavesdropper's received signal is equivalent to
\begin{eqnarray*}
  \YY_e &=& \mathbf{Q}\left(\begin{array}{c} \hat{\RR} \\ \mathbf{0} \end{array}\right)\mathbf{X} +\left(\begin{array}{c} \WW_e^{(1)} \\ \WW_e^{(2)} \end{array}\right) \\
  \iff \mathbf{Q}^{\dagger} \YY_e & = & \left(\begin{array}{c} \hat{\RR} \\ \mathbf{0} \end{array}\right)\mathbf{X} +\left(\begin{array}{c} \tilde{\WW}_e^{(1)} \\ \tilde{\WW}_e^{(2)} \end{array}\right),
\end{eqnarray*}
where the components of the noise matrices $\tilde{\WW}_e^{(1)}, \tilde{\WW}_e^{(2)}$ are i.i.d. Gaussian. The leakage is therefore the same as for the square channel $\hat{\RR}$ and a universal code will also achieve vanishing leakage for the non-square channel.

\subsection{The Legitimate Channel: Reliability}
It was shown in \cite{Our} that if $\XX \sim \mathcal{D}_{\Lambda_b^T,\sigma_s}$, then the maximum-a-posteriori (MAP) decoder for the signal $\YY_b$ is equivalent to lattice decoding of $\mathbf{F}_b \YY_b$, where $\mathbf{F}_b$ is the MMSE-GDFE matrix to be defined in the sequel. We cannot claim directly that $\XX \sim \mathcal{D}_{\Lambda_b^T,\sigma_s}$, since the message distribution in $\mathcal{M}_T$ need not be uniform. Nonetheless, we show that reliability is still possible for all individual messages.

The full decoding process is depicted in Figure \ref{fig:wiretapb}. Bob first applies a filtering matrix $\mathbf{F}_b$ so that
$$\tilde{\YY}_b = \mathbf{F}_b \YY_b = \mathbf{R}_b \XX + \mathbf{W}_{b,\eff},$$
where $\RR_b^{\dagger} \RR_b = \HH_b^{\dagger} \HH_b + \rho_b^{-1} \II$ and $\FF_b^{\dagger} \RR_b = \rho_b^{-1} \HH_b$, and the effective noise is
$$\mathbf{W}_{b,\eff} = (\FF_b \HH_b - \RR_b)\XX + \FF_b \mathbf{W}_b.$$
The next step is to decode $\tilde{\YY}_b$ in $\RR_b \Lambda_b^T$, in order to obtain $Q_{\RR_b \Lambda_b^T}(\tilde{\YY}_b),$ which is then remapped into the element of the coset $\RR_b\Lambda_b^T/\RR_b\Lambda_e^T$ through the operation $\mbox{mod }\RR_b \Lambda_e^T$. We can then invert the linear transformation associated to $\RR_b$ (notice that $\RR_b$ has full rank) in order to obtain the coset in $\Lambda_b^T/\Lambda_e^T$ and re-map it to the message space $\mathcal{M}_T$ through $\phi^{-1}$.

In the first step, from Lemma \ref{lem:product}, the effective noise $\WW_{b,\eff}$ is statistically close to a Gaussian noise with covariance:
\begin{eqnarray}
	\SSigma_{b,\eff} 	&=& \sigma_s^2 (\mathbf{F}_b\mathbf{H}_b-\mathbf{R}_b)(\mathbf{F}_b\mathbf{H}_b-\mathbf{R}_b)^\dagger + \sigma_b^2\mathbf{F}_b\FF_b^\dagger \\ \nonumber
	&=& \rho_b^{-2} \sigma_s^2 \mathbf{R}_b^{-\dagger}\mathbf{R}_b^{-1} + \sigma_b^2\rho_b^{-2}\mathbf{R}_b^{-\dagger}\mathbf{H}_b^\dagger\mathbf{H}_b\mathbf{R}_b^{-1}\\ \nonumber
	&=& \sigma_b^2 \mathbf{R}_b^{-\dagger} (\rho_b^{-1} \mathbf{I} + \mathbf{H}_b^\dagger\mathbf{H}_b)\mathbf{R}_b^{-1}= \sigma_b^2\mathbf{I}.
	\label{eq:Variance}
\end{eqnarray}
provided that $\varepsilon_{(F_b\HH_b-\RR_b)\Lambda_e^T}(\SSigma_{b,\text{inv}})$ is small, where
\begin{equation}
\SSigma_{b,\text{inv}}^{-1} = (\sigma_s^2 (\mathbf{F}_b\mathbf{H}_b-\mathbf{R}_b)(\mathbf{F}_b\mathbf{H}_b-\mathbf{R}_b)^\dagger)^{-1} + (\sigma_b^2\mathbf{F}_b\FF_b^\dagger)^{-1}.
\end{equation}
The probability of error given \textit{any} message $m$ is thus bounded by
\begin{equation}
\mathbb{P}_{\text{err}|m} \leq \left(1+4\varepsilon_{(F_b\HH_b-\RR_b)\Lambda_e^T}(\SSigma_{b,\text{inv}})\right) P(\tilde{\WW}_{b,\eff} \notin \mathcal{V}(\mathbf{R}_b \Lambda_b^T)),
\end{equation}
where each entry of $\tilde{\WW}_{b,\eff}$ is i.i.d. normal  with variance $\sigma_b^2$. Therefore, if we guarantee that $\varepsilon_{(F_b\HH_b-\RR_b)\Lambda_e^T}(\SSigma_{b,\text{inv}})$ is bounded and if we choose a universally good lattice, the probability vanishes for all possible $\RR_b$. This is possible \cite{Our} provided that
\begin{equation}\gamma_{\RR_b\Lambda_b^T}(\sigma_b) > \pi e,
\label{eq:Reliability}
\end{equation}
namely,
\begin{equation}\label{eq:volb}
V(\Lambda_b^T)^{1/n_aT} > \left| \II + \rho_b \HH_b^{\dagger} \HH_b\right|^{-1/n_a} \pi e\sigma_s^2 = (\pi e \sigma_s^2)e^{-C_b/n_a}.
\end{equation}

However, the evaluation of $\SSigma_{b,\text{inv}}$ is cumbersome and implies an extra condition for the flatness of $\Lambda_e^T$. Next we show, instead, how to circumvent this problem by using the fact that that the effective noise is ``asymptotically'' sub-Gaussian with covariance matrix $\sigma_b^2 \II$. We say that a centred random vector $\ww \in \mathbb{R}^n$ is sub-Gaussian with (proxy) parameter $\sigma$ if
$$\log E[e^{t \left\langle\ww,\uu \right\rangle}] \leq \frac{t^2 \sigma^2}{2} $$
for all $t \in \mathbb{R}$ and all unit norm vectors $\uu \in \mathbb{R}^n$.

\begin{lemma}[\cite{LVL16}]
	Let $\xx$ be a random vector with distribution $\mathcal{D}_{\Lambda_e^T+\bm{\lambda}_m,\sigma_s}$, and let $\varepsilon^\prime = \varepsilon_{\Lambda_e^T}\left(\sigma_s\right).$ For any matrix $\mathbf{A}$ and any vector $\uu \in \mathbb{C}^{n_b T}$, we have:
	$$E[e^{\Re\left\{\xx^\dagger \mathbf{A} \uu \right\}}] \leq \left(\frac{1+\varepsilon^{\prime}}{1-\varepsilon^{\prime}}\right) e^{\frac{\sigma_s^2}{4} \left\| \mathbf{A}\uu \right\|^2}.$$
\end{lemma}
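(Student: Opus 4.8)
The plan is to evaluate the moment generating function directly from the definition of the discrete Gaussian and reduce it, by completing the square, to a ratio of two values of the same periodic Gaussian $f_{\sigma_s,\Lambda_e^T}$, which the flatness factor controls uniformly from above and below. First I would abbreviate $\mathbf{v} = \mathbf{A}\uu$ and expand the expectation over the coset $\Lambda_e^T + \bm{\lambda}_m$,
$$E[e^{\Re\{\xx^\dagger \mathbf{v}\}}] = \frac{1}{f_{\sigma_s,\Lambda_e^T}(\bm{\lambda}_m)}\sum_{\xx\in\Lambda_e^T+\bm{\lambda}_m} e^{\Re\{\xx^\dagger\mathbf{v}\}}\, f_{\sigma_s}(\xx),$$
noting that the denominator is exactly the normalizing constant appearing in the definition of $\mathcal{D}_{\Lambda_e^T+\bm{\lambda}_m,\sigma_s}$.

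The key algebraic step is to complete the square in the exponent. Passing to the $2n_aT$-dimensional real representation (so that $\Re\{\xx^\dagger\mathbf{v}\}$ becomes an ordinary real inner product while $\|\xx\|$ is preserved), I rewrite $-\|\xx\|^2/\sigma_s^2 + \Re\{\xx^\dagger\mathbf{v}\}$ as $-\|\xx - \tfrac{\sigma_s^2}{2}\mathbf{v}\|^2/\sigma_s^2 + \tfrac{\sigma_s^2}{4}\|\mathbf{v}\|^2$. Pulling out the constant $e^{\sigma_s^2\|\mathbf{v}\|^2/4}$, the surviving sum is $\sum_{\xx\in\Lambda_e^T+\bm{\lambda}_m} f_{\sigma_s}(\xx - \tfrac{\sigma_s^2}{2}\mathbf{v})$, which—using that the Gaussian kernel is even and the lattice symmetric—is precisely $f_{\sigma_s,\Lambda_e^T}(\bm{\lambda}_m - \tfrac{\sigma_s^2}{2}\mathbf{v})$, i.e. the same periodic Gaussian evaluated at a shifted center. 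Hence
$$E[e^{\Re\{\xx^\dagger\mathbf{v}\}}] = e^{\frac{\sigma_s^2}{4}\|\mathbf{v}\|^2}\,\frac{f_{\sigma_s,\Lambda_e^T}(\bm{\lambda}_m-\tfrac{\sigma_s^2}{2}\mathbf{v})}{f_{\sigma_s,\Lambda_e^T}(\bm{\lambda}_m)}.$$

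Finally I would invoke the flatness factor. By definition, $\varepsilon' = \epsilon_{\Lambda_e^T}(\sigma_s)$ gives $1-\varepsilon' \le V(\Lambda_e^T)\, f_{\sigma_s,\Lambda_e^T}(\yy) \le 1+\varepsilon'$, and this holds for every $\yy$ by $\Lambda_e^T$-periodicity of $f_{\sigma_s,\Lambda_e^T}$. Applying the upper bound to the numerator and the lower bound to the denominator, the common factor $V(\Lambda_e^T)$ cancels and the ratio is at most $(1+\varepsilon')/(1-\varepsilon')$. Substituting back $\mathbf{v}=\mathbf{A}\uu$ yields the stated inequality.

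The computation itself is short, so the only care required is bookkeeping rather than difficulty. The two points to watch are: (a) the complex-to-real passage when completing the square, verifying that the shift $\tfrac{\sigma_s^2}{2}\mathbf{v}$ and the exponent constant $\tfrac{\sigma_s^2}{4}\|\mathbf{v}\|^2$ emerge with exactly the factors in the statement (the coefficient $\tfrac14$, as opposed to $\tfrac12$, is a consequence of the $e^{-\|\cdot\|^2/\sigma_s^2}$ normalization used here, so I would double-check it against the paper's convention); and (b) recognizing that numerator and denominator are values of the \emph{same} function $f_{\sigma_s,\Lambda_e^T}$, which is what turns the two-sided flatness bound into the multiplicative factor $(1+\varepsilon')/(1-\varepsilon')$ rather than a weaker additive estimate.
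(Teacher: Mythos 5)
Your proposal is correct, and it reproduces essentially the standard argument: the paper itself gives no proof for this lemma (it is quoted from \cite{LVL16}), and the proof there is exactly your computation --- complete the square to get $E[e^{\Re\{\xx^\dagger\mathbf{A}\uu\}}] = e^{\frac{\sigma_s^2}{4}\|\mathbf{A}\uu\|^2}\, f_{\sigma_s,\Lambda_e^T}\bigl(\bm{\lambda}_m-\tfrac{\sigma_s^2}{2}\mathbf{A}\uu\bigr)/f_{\sigma_s,\Lambda_e^T}(\bm{\lambda}_m)$, then bound the ratio by $(1+\varepsilon')/(1-\varepsilon')$ using the two-sided flatness-factor estimate, which applies at every point by $\Lambda_e^T$-periodicity, just as you note. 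Your bookkeeping checks out, including the coefficient $\tfrac{1}{4}$ arising from the $e^{-\|\cdot\|^2/\sigma_s^2}$ normalization; the only implicit hypothesis worth recording is $\varepsilon'<1$, without which the stated bound is vacuous.
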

Notice that the average power per dimension of a sub-Gaussian random variable is always less than or equal to its parameter $\sigma_s^2$. Moreover, the sum of two sub-Gaussians is also a sub-Gaussian (for more properties, the reader is referred to \cite{LVL16}). The above lemma, along with \eqref{eq:Variance}, allows us to establish that $\WW_{b,\eff}$ is almost sub-Gaussian with parameter $\sigma_b^2$. Therefore, as long as $\varepsilon^{\prime} \approx 0$ the probability of error tends to zero if we choose $\Lambda_b^T$ to be universally AWGN-good.

\subsection{Proof of Theorem \ref{thm:achievableRatesFinal}: Achievable Secrecy Rates}
From the previous subsections, semantic security is achievable if $\Lambda_b^T$ and $\Lambda_e^T$ satisfy:
\begin{enumerate}
	\item Reliability \eqref{eq:Reliability}: $\gamma_{\RR_b\Lambda_b^T}(\sigma_b) >\pi e$
	\item Secrecy \eqref{eq:Secrecy}: $\gamma_{\mathcal{H}_e \Lambda_e^T}(\sqrt{\SSigma_3}) < \pi$
	\item Sub-Gaussianity of equivalent noise and power constraint: $\varepsilon_{\Lambda_e^T}(\sigma_s) \to 0$.
	\end{enumerate}
From \label{eq:vole} and \eqref{eq:volb}, the first two conditions can be satisfied for rates up to
$$\log |\II + \rho_b \HH_b^{\dagger} \HH_b| - \log |\II + \rho_e \HH_e^{\dagger} \HH_e| - n_a$$
nats per channel use, but the last conditions may, \textit{a priori}, limit these rates to certain SNR regimes. Fortunately, if condition $2)$ is satisfied, we automatically satisfy the condition for $\varepsilon_{\Lambda_e^T}(\sigma_s) \to 0$, since
$$\frac{V(\Lambda_e^T)^{1/n_aT}}{\sigma_s^2} \leq \frac{V(\Lambda_e^T)^{1/n_aT}}{e^{-C_e/n_a} \sigma_s^2} < \pi.$$
Therefore, if $(\Lambda_b^T, \Lambda_e^T)$ is a sequence of nested lattices, where
\begin{enumerate}
\item $\Lambda_b^T$ is universally good for the compound channel with set $\mathcal{S}_b$,
\item $\Lambda_e^T$ is universally secure for the compound channel with set $\mathcal{S}_e$,
\end{enumerate}
then nested lattice Gaussian coding achieves any secrecy rate up to
$$R \leq (C_b - C_e - n_a)^{+}.$$

The \textit{existence} of such nested pairs is proved subsequently in Section \ref{sec:universallyFlat} and Appendix \ref{ref:appSimultaneous}, which concludes the proof of Theorem \ref{thm:achievableRatesFinal}.

In fact using a method in \cite{HB14} we can further reduce the gap to approximately $n_a \log (e/2)$. We conjecture that this gap can be completely removed with tighter bounds for the variational distance between the discrete and continuous Gaussians. This is left as an open question.
%
\begin{rmk}
	Theorem \ref{thm:achievableRatesFinal} is also a slight improvement on the main result of \cite[Theormm 5]{LLBS_12} in the sense that one of the conditions on the SNR of Bob ($\rho_b > e$) is not needed any longer. Indeed, for the Gaussian channel, $n_a = 1$ and the SNR condition for non-zero secrecy rates is $C_b > C_e + 1$, which is equivalent to
	$$\frac{1+\rho_b}{1+\rho_e} > e.$$
\end{rmk}
\section{Universally Flat Gaussians}
\label{sec:universallyFlat}
The results in the previous section require the existence of sequences of lattices which are universally good for the wiretap channel. More specifically, we need a sequence $\Lambda_b^T$ which is universally AWGN-good and a sequence $\Lambda_e^{T}$ whose leakage vanishes for \textit{all} channel realizations of the eavesdropper. The first condition was studied in \cite{Our}, where it was shown, through a compactness argument, that random lattices are universal. In this section we deal with the second condition and prove the existence of lattices $\Lambda_e^{T}$ which are universally good for secrecy of the MIMO channel.

Two methods are provided to establish the main result. The first method relies solely on random lattice coding arguments and achieves secrecy capacity up to a gap of $n_a$ nats per channel use. The second method is based on algebraic reductions and exhibits a larger gap (by a factor of $\omega(n_a\log n_a)$) to capacity, but has the appealing property of reducing the problem to the one of constructing secrecy-good lattices for the \textit{AWGN channel}, making it potentially more useful in practice.
\subsection{Construction A}{\label{ConstructionA}}
Construction A (or ``mod-$p$'') lattices are certainly the simplest choice for constructing pairs of nested lattices, however generalizations based on algebraic lattices may offer greater flexibility in the code design, which could be leveraged to obtain better decoding complexity, diversity, or other parameters. Moreover, the coding scheme in Section \ref{sec:algebraicApproach} entails an extra condition on the ensemble, which can be satisfied by assuming an algebraic structure. A general ``flexible'' construction can be defined via ``generalized reductions''. Let $\psi: \Lambda_{\text{base}} \to \mathbb{F}_p^T$ be a surjective homomorphism from a base lattice $\Lambda_{\text{base}}$ of complex dimension $N \geq T$ to the vector space $\mathbb{F}_p^T$ (also referred to as a \textit{reduction}). Define the lattice $\Lambda(\mathcal{C})$ as the pre-image of a linear code $\mathcal{C}$,
$$\Lambda(\mathcal{C}) = \psi^{-1}(\mathcal{C}).$$
If $\mathcal{C}$ has length $T$ and dimension $k$, the volume of $\Lambda(\mathcal{C})$ equals to $p^{T-k} V(\Lambda_{\text{base}})$. For instance if $N=T$, $\Lambda_{\text{base}} = \mathbb{Z}[i]^T$ the mapping $\psi$ is the reduction modulo $p$:
\begin{equation}
\begin{split}
&\psi(a_1+b_1 i, a_2+b_2 i, \ldots, a_T + b_T i) =\\
& (a_1 \,\, (\text{mod } p), b_1 \,\, (\text{mod } p), a_2 \,\, (\text{mod } p), b_2 \,\, (\text{mod } p), \cdots, \\
& a_T \,\, (\text{mod } p), b_T \,\, (\text{mod } p) ),
\end{split}
\end{equation}
 we recover an analogue of Loeliger's (mod-$p$) Construction A \cite{Loeliger}. In this case we obtain a nested lattice beween $\mathbb{Z}[i]^T$ and $p\mathbb{Z}[i]^T$. More refined ``direct'' constructions can be obtained by using number theory and prime ideals of $\mathbb{Z}[i]$. For instance, if $\Lambda_{\text{base}}$ is the embedding of the ring of integers of a number field and $\psi$ is the reduction modulo a prime ideal we can recover the constructions in \cite{Our}.
Notice that, for this construction, if $\mathcal{C}_1 \subset \mathcal{C}_2$, we obtain two nested lattices $\Lambda(\mathcal{C}_1) \subset \Lambda(\mathcal{C}_2)$.

It was shown in \cite{Campello17} that if $\{\psi\}$ is an infinite sequence of mappings, under mild conditions\footnote{More specifically, it is required that that the sequence of lattices corresponding to the kernels of $\psi$ has a non-vanishing Hermite parameter.} the ensemble of lattices averaged over all linear codes $\mathcal{C}$ of same dimension $k$ satisfies the \textit{Minkowski-Hlawka theorem}, namely:
$$\lim_{p \to \infty} \mathbb{E}_{\mathcal{C}}\left[\sum_{\xx \in \beta \Lambda(\mathcal{C}) \nozero} f(\xx)\right] = V^{-1} \int_{\mathbb{C}^{N}} f(\xx),$$
where $\beta = V^{1/2N} (p^{T-k} V(\Lambda))^{1/2N}$ is a constant so that all lattices have volume $V$. The result holds for any integrable function $f$ which decays sufficiently fast (in particular any function upper bounded by a constant times $1/(\left\|\mathbf{x}\right\| + 1)^{2N+\delta}$ for some $\delta > 0$). Clearly the Gaussian probability density function satisfies this restriction.

\subsection{Lattices Which Are Good for Secrecy}
In what follows we will apply the generalized version of Construction A to construct a sequence of lattices $\Lambda_e^T$ which is good for secrecy, \emph{i.e.}, which has vanishing flatness factor for all eavesdropper channel realizations. As usual, $T$ will denote the blocklength (cf. Equation \eqref{eq:block-fading}), $N$ will be set to $n_aT$ (the complex dimension of the coding lattice) and $k < T$ is any positive integer.

Using the above Minkowski-Hlawka theorem, there exists an ensemble of lattices $\mathbb{L}$ of volume $V$ such that
\begin{equation}
\mathbb{E}_{\mathbb{L}}\left[\sum_{\xx \in \Lambda \nozero}f(\xx)\right] \leq V^{-1} \int_{\mathbb{C}^{n_a T}} f(\xx)\mathbf{d} \xx + \varepsilon,
\label{eq:MH}
\end{equation}
for any $\varepsilon > 0$ . Equation \eqref{eq:MH} implies that
$$\mathbb{E}_{\mathbb{L}}\left[\sum_{\xx \in \Lambda\nozero} e^{-\xx^\dagger \SSigma^{-1} \xx} \right] \leq  V^{-1}\pi^{n_a T} |\SSigma| + \varepsilon,$$
therefore
$$\mathbb{E}_{\mathbb{L}}\left[\varepsilon_{\Lambda}(\sqrt{\SSigma}) \right] \leq \frac{V(1+\varepsilon)}{\pi^{n_a T} |\SSigma| }.$$
Hence as long as $\varepsilon$ is bounded and $V^{1/T}/\pi |\SSigma|^{1/T}$ is bounded by a constant less than $1$, the flatness factor tends to zero exponentially in the proposed lattice coding scheme. The condition for $\varepsilon$ can be achieved, for instance, by choosing $p$ sufficiently large in Construction A.

\begin{lemma}[Universally Flat Lattice Gaussians]
	Let $\mathcal{H}_e = \HH_e \otimes \II$ and $\SSigma_3^{-1} =  (\mathcal{H}_e\mathcal{H}_e^{\dagger})^{-1} \sigma_s^{-2} + \sigma_e^{-2} \mathbf{I},$ as in Equation \eqref{eq:sigma3}.	For any $\gamma < \pi$, there exists a sequence of lattices $\Lambda_e^T \subset \mathbb{C}^{n_aT}$ with $\gamma_{\mathcal{H}_e \Lambda_e^T}(\sqrt{\SSigma_3}) \leq \gamma $ and universally vanishing flatness factor, \emph{i.e.},
	$$\lim_{T \to \infty} \varepsilon_{\mathcal{H}_e \Lambda_e^T}(\sqrt{\SSigma_3})= 0 \mbox{ for all } \HH_e \in \mathcal{S}_e.$$
	Moreover, the convergence rate is exponential, \emph{i.e.}, for all $\HH_e \in \mathcal{S}_e$, $\varepsilon_{\mathcal{H}_e \Lambda_e^T}(\sqrt{\SSigma_3}) = e^{-\Omega(T)}$.
	\label{lem:universallyFlatGaussians}
\end{lemma}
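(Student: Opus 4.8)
The plan is to peel the channel matrix off the lattice, reducing the correlated flatness factor of $\mathcal{H}_e\Lambda_e^T$ to that of the \emph{fixed} lattice $\Lambda_e^T$ with respect to an effective covariance whose determinant is constant over the entire compound set. Writing each point of $\mathcal{H}_e\Lambda_e^T$ as $\mathcal{H}_e\mathbf{v}$ with $\mathbf{v}\in\Lambda_e^T$ and substituting into the theta-series form of the flatness factor, the quadratic form $\bm{\lambda}^\dagger\SSigma_3^{-1}\bm{\lambda}$ becomes $\mathbf{v}^\dagger\tilde{\SSigma}^{-1}\mathbf{v}$ with $\tilde{\SSigma}^{-1}\triangleq\mathcal{H}_e^\dagger\SSigma_3^{-1}\mathcal{H}_e$. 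Since $\mathcal{H}_e$ is square and invertible, $\mathcal{H}_e^\dagger(\mathcal{H}_e\mathcal{H}_e^\dagger)^{-1}\mathcal{H}_e=\II$, so $\tilde{\SSigma}^{-1}=\sigma_s^{-2}\II+\sigma_e^{-2}\mathcal{H}_e^\dagger\mathcal{H}_e$ by \eqref{eq:sigma3}, and a one-line determinant computation gives $\varepsilon_{\mathcal{H}_e\Lambda_e^T}(\sqrt{\SSigma_3})=\varepsilon_{\Lambda_e^T}(\sqrt{\tilde{\SSigma}})$. The crucial point is that, with $\sigma_s^2=P$, $|\tilde{\SSigma}|^{1/n_aT}=\sigma_s^2\,|\II+\rho_e\HH_e^\dagger\HH_e|^{-1/n_a}=\sigma_s^2e^{-C_e/n_a}$ is \emph{independent} of the realization as long as $\HH_e\in\mathcal{S}_e$; hence $\gamma_{\Lambda_e^T}(\sqrt{\tilde{\SSigma}})=V(\Lambda_e^T)^{1/n_aT}/(\sigma_s^2e^{-C_e/n_a})$ and the constraint $\gamma<\pi$ is met by simply fixing the lattice volume.

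With the reduction in hand, the second step settles a single realization. Applying the Minkowski--Hlawka bound \eqref{eq:MH} to the ensemble $\mathbb{L}$ of volume $V=(\gamma\,\sigma_s^2e^{-C_e/n_a})^{n_aT}$ gives $\mathbb{E}_{\mathbb{L}}[\varepsilon_{\Lambda_e^T}(\sqrt{\tilde{\SSigma}})]\le (1+\varepsilon)V/(\pi^{n_aT}|\tilde{\SSigma}|)=(1+\varepsilon)(\gamma/\pi)^{n_aT}$, which is exponentially small in $T$ because $\gamma<\pi$.

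The last and central step upgrades this to a single lattice that is flat for \emph{all} $\HH_e\in\mathcal{S}_e$ at once. Here I would exploit two facts: (i) $\tilde{\SSigma}$ depends on the realization only through the $n_a\times n_a$ Gram matrix $\HH_e^\dagger\HH_e$, via $\mathcal{H}_e^\dagger\mathcal{H}_e=(\HH_e^\dagger\HH_e)\otimes\II_T$, and this matrix ranges over a compact set of dimension \emph{independent} of $T$; and (ii) the flatness factor is monotone in the positive-definite order, $\varepsilon_{\Lambda}(\sqrt{\SSigma_1})\le\varepsilon_{\Lambda}(\sqrt{\SSigma_2})$ whenever $\SSigma_1\succeq\SSigma_2$. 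Cover the compact set of Gram matrices by finitely many operator-norm balls of radius $\delta$ with centers $\HH_e^{(1)},\dots,\HH_e^{(M_\delta)}\in\mathcal{S}_e$; because the ambient dimension does not grow with $T$, the count $M_\delta$ is a constant. For any $\HH_e$ in the $i$-th ball, $\HH_e^\dagger\HH_e\preceq\HH_e^{(i)\dagger}\HH_e^{(i)}+\delta\II$, so the inflated covariance $\SSigma_-^{(i)}$ defined by $(\SSigma_-^{(i)})^{-1}=\sigma_s^{-2}\II+\sigma_e^{-2}(\mathcal{H}_e^{(i)\dagger}\mathcal{H}_e^{(i)}+\delta\II)$ satisfies $\SSigma_-^{(i)}\preceq\tilde{\SSigma}$, whence $\varepsilon_{\Lambda_e^T}(\sqrt{\tilde{\SSigma}})\le\varepsilon_{\Lambda_e^T}(\sqrt{\SSigma_-^{(i)}})$ by monotonicity. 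Choosing $\delta$ a small fixed constant keeps the VNR of each $\SSigma_-^{(i)}$ below some $\gamma'<\pi$, so each term still enjoys the exponential bound of the previous step. A union bound over the $M_\delta$ centers yields $\mathbb{E}_{\mathbb{L}}[\max_i\varepsilon_{\Lambda_e^T}(\sqrt{\SSigma_-^{(i)}})]\le M_\delta(1+\varepsilon)(\gamma'/\pi)^{n_aT}$, still exponentially small since $M_\delta$ does not depend on $T$; hence some lattice in the ensemble beats this mean, and for that lattice $\sup_{\HH_e\in\mathcal{S}_e}\varepsilon_{\mathcal{H}_e\Lambda_e^T}(\sqrt{\SSigma_3})=e^{-\Omega(T)}$, which is the claim.

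The main obstacle is exactly this universality step: the naive fear is that a net fine enough to control the continuum $\mathcal{S}_e$ would have a number of points growing with $T$, overwhelming the exponential decay in the union bound. The resolution hinges on recognizing that, once $\mathcal{H}_e$ is peeled off, the realization enters only through the $T$-independent Gram matrix $\HH_e^\dagger\HH_e$, so a \emph{constant} number of balls suffices, and monotonicity converts each ball into a single dominating covariance. The one calculation to perform with care is checking that the inflation by $\delta$ perturbs the determinant, and hence the VNR, by only a multiplicative constant independent of $T$ (indeed $\gamma'\le\gamma(1+\rho_e\delta)$), so that $\gamma'<\pi$ can be guaranteed for a fixed $\delta$.
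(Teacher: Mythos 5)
Your proposal is correct and takes essentially the same route as the paper's proof: a Minkowski--Hlawka bound for a fixed realization, a union bound over a finite $\delta$-net of a compact, $T$-independent parameter space, and a perturbation step showing each net point dominates its ball with only a constant multiplicative loss in the VNR. The single cosmetic difference is that you dominate each ball via Loewner monotonicity applied to the inflated precision matrix $\sigma_s^{-2}\II+\sigma_e^{-2}\bigl(\HH_e^{(i)\dagger}\HH_e^{(i)}+\delta\II\bigr)\otimes\II_T$, whereas the paper's Appendix B perturbs the dual-side covariance in the theta series (giving $\varepsilon_{\Lambda_e^T}(\sqrt{\SSigma})\leq\varepsilon_{\Lambda_e^T}(\sqrt{\overline{\SSigma}-\delta\II})$), both yielding the same $T$-independent factor (your $\gamma'\leq\gamma(1+\rho_e\delta)$ versus the paper's $f(\delta)$).
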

\begin{proof}
	The proof is analogous to the quantization argument for the probability of error in \cite{Our}, which, in turn follows \cite{ShiWesel07}.
	
	(i) \textit{Fixed $\HH_e$}. If $\mathbb{L}$ is a Minkowski-Hlawka ensemble with volume $V$, then

	$$\mathbb{E}_{\mathbb{L}}\left[\varepsilon_{\mathcal{H}_e \Lambda^{(T)}}(\SSigma_3) \right] \leq (1+ \varepsilon) \left(\frac{\gamma_{\mathcal{H}_e \Lambda^{(T)}}(\sqrt{\SSigma_3})}{\pi}\right)^{n_aT}$$
	which guarantees a sequence $\Lambda_e^{(T)}$ (at this point, possibly depending on $\HH_e$) with vanishing flatness factor as long as $\gamma < \pi$.
	
	(ii) \textit{Finite set}. Let $\mathcal{S}_{Q} \subset \mathcal{S}_e$ be a \textit{finite} subset of $\mathcal{S}_e$ with cardinality $Q$. We have
\begin{eqnarray*}
  \mathbb{E}_{\mathbb{L}}\left[\max_{\HH_e \in \mathcal{S}_Q} \varepsilon_{\mathcal{H}_e \Lambda_e^{(T)}}(\SSigma_3) \right] &\leq& \mathbb{E}_{\mathbb{L}}\left[\sum_{\HH_e\in \mathcal{S}_Q} \varepsilon_{\mathcal{H}_e \Lambda_e^{(T)}}(\SSigma_3) \right] \\
  =Q(1+\varepsilon)\left(\frac{\gamma_{\mathcal{H}_e \Lambda_e^{(T)}}(\sqrt{\SSigma_3})}{\pi}\right)^{n_aT} &\to& 0
\end{eqnarray*}
	which guarantees a sequence $\Lambda_e^{(T)}$ with exponentially vanishing flatness factor for any $\HH_e \in \mathcal{S}_e$.
	
	(iii) \textit{Quantization step}. By quantizing the channel space, we can extend step (ii) into a universal code for any channel in $\mathcal{S}_e$. This analysis is described in Appendix \ref{app:2}. Here we provide a sketch of the argument. Suppose $\mathcal{S}_Q$ is a $\delta$-covering for $\mathcal{S}_e$, \emph{i.e.}, for all $\HH_{e}\in \mathcal{S}_e$, there exists $\HH_{q} \in \mathcal{S}_Q$ such that $\left\| \HH_e- \HH_q \right\| \leq \delta$. From the compactness of $\mathcal{S}_e$, such a covering exists for any arbitrarily small $\delta > 0$, and the size of the covering depends only on $n_a$, which is fixed for the whole transmission. Furthermore, the theta series is a continuous function of $\HH_e$, which implies that the flatness factor in different channel realizations are also close. From this, we can choose $\delta$ independently of $T$ that guarantees that the total exponent is negative. Therefore, the flatness factor tends to zero uniformly as $T \to \infty$.
%
\end{proof}

The above proof does not rely on a specific realization but rather on the knowledge of the compact \textit{compound} set $\mathcal{S}_e$. It is reminiscent of a widely used technique in coding for compound channels (\emph{e.g.}, \cite{ShiWesel07}). Essentially, an encoder develops a code for $Q_{\delta}$ channels, where $Q_{\delta}$ is the cardinality of a good quantizer of the channel space. However the quantization $Q_{\delta}$ may increase the effective blocklength for a target information leakage. Moreover, the proof does not give us insights on how to effectively quantize $\mathcal{S}_e$, making algebraic approaches appealing in practice.

Lemma \ref{lem:universallyFlatGaussians} shows the existence of universally flat Gaussians or, in other words, the existence of a sequence of lattices $\Lambda_e^{T}$ which are good for secrecy. Recall that in our construction \ref{sec:overview}, we required $\Lambda_e^{T}$ to be nested with $\Lambda_b^{T} \supset \Lambda_e^{T}$, where $\Lambda_b^T$ is a sequence of lattices which are good for the legitimate compound channel. The existence of $\Lambda_b^T$ was proven in \cite{Our}. In Appendix \ref{ref:appSimultaneous} we argue that both conditions can be achieved by a nested pair $(\Lambda_b^T, \Lambda_e^T)$ which is the last missing part of the proof of Theorem \ref{thm:achievableRatesFinal}.
%

\subsection{Algebraic Approach}
\label{sec:algebraicApproach}

Following \cite{LuzziGoldenCode}, we now define a lattice admitting algebraic reduction.

\begin{defi}[EU Decomposition] We say that $\Lambda$ admits algebraic reduction if for any unit determinant matrix $\mathbf{A} \in \mathbb{C}^{n_a \times n_a}$ there exists a matrix decomposition of the form $\mathbf{A} = \EE \UU$, where $\EE$ and $\UU$ are also unit-determinant satisfying the following properties:
	\begin{enumerate}
		\item $\UU \Lambda = \Lambda$,
		\item $\left\| \EE^{-1} \right\|_F \leq \alpha$ for some absolute constant $\alpha$ that does not depend on $\mathbf{A}$.
	\end{enumerate}	
	\label{def:algRed}
\end{defi}

The Golden Code is one example of a lattice that admits algebraic reduction \cite{LuzziGoldenCode}. Lattices built from generalized versions of Construction A based on number fields and division algebras also admit a similar reduction (if necessary we may relax requirement 1) to include equivalence instead of equality). This property was used in \cite{Our} to achieve capacity of the infinite compound MIMO channel. Note that $\alpha$ grows with $n_a$. See \cite[Theorem 3]{Our} for an upper bound on $\alpha$ in the case of number fields, and \cite{LuzziGoldenCode} in the case of division algebras. Next, we show that an ensemble of lattices satisfying Definition \ref{def:algRed} achieves the secrecy capacity of the compound MIMO channel up to a constant gap.

Recall the following relation between the spectral norm and the Frobenius norm:
$$ \left\|\II\otimes \mathbf{A} \right\| = \left\|\mathbf{A} \right\| \leq \left\|\mathbf{A} \right\|_F,$$
for the identity matrix $\II$ of any dimension.

\begin{lemma}
	Suppose that $\Lambda \subset \mathbb{C}^{n_a T}$ is such that its dual lattice $\Lambda^*$ admits algebraic reduction. Then for $\mathbf{A} \in \mathbb{C}^{n_a \times n_a}$,
	$$\varepsilon_{\Lambda}(\sqrt{\II_T \otimes \mathbf{A}}) \leq \varepsilon_{\Lambda}\left(|{\mathbf{A}}|^{1/2n_a}/\alpha \right).$$
\end{lemma}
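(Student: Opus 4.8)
The plan is to pass to the dual-lattice (theta-series) representation of the flatness factor and then absorb the channel matrix into the dual lattice using the algebraic reduction. First I would record the correlated analogue of \eqref{flatness-dual-lattice}: combining the absorption identity (which gives $\epsilon_{\Lambda}(\sqrt{\SSigma}) = \epsilon_{\sqrt{\SSigma^{-1}}\Lambda}(1)$) with the dual formula yields
\begin{equation*}
\epsilon_{\Lambda}(\sqrt{\SSigma}) = \Theta_{\sqrt{\SSigma}\Lambda^*}(\pi) - 1 = \sum_{\bm{\lambda}^* \in \Lambda^*} e^{-\pi^2 \bm{\lambda}^{*\dagger}\SSigma\bm{\lambda}^*} - 1,
\end{equation*}
where I used that $(\sqrt{\SSigma^{-1}}\Lambda)^* = \sqrt{\SSigma}\Lambda^*$ for the Hermitian factor $\sqrt{\SSigma}$. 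Specializing to $\SSigma = \II_T \otimes \mathbf{A}$ and factoring $\mathbf{A} = \mathbf{G}^{\dagger}\mathbf{G}$ (e.g. $\mathbf{G} = \sqrt{\mathbf{A}}$, using that $\mathbf{A}$ is Hermitian positive definite), the quadratic form becomes $\bm{\lambda}^{*\dagger}(\II_T\otimes\mathbf{A})\bm{\lambda}^* = \|(\II_T\otimes\mathbf{G})\bm{\lambda}^*\|^2$, so that $\epsilon_{\Lambda}(\sqrt{\II_T\otimes\mathbf{A}}) = \sum_{\bm{\lambda}^*}e^{-\pi^2\|(\II_T\otimes\mathbf{G})\bm{\lambda}^*\|^2} - 1$.

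Next I would normalize the square root to unit determinant. Writing $\mathbf{G}_0 = \mathbf{G}/|\mathbf{A}|^{1/2n_a}$, one checks $|\mathbf{G}_0| = 1$, so the EU decomposition of Definition \ref{def:algRed} applies to $\mathbf{G}_0 = \EE\UU$, with $\UU$ unimodular for $\Lambda^*$ and $\|\EE^{-1}\|_F \leq \alpha$. Interpreting the $n_a\times n_a$ reduction block-wise (as in \cite{Our}), so that $(\II_T\otimes\UU)\Lambda^* = \Lambda^*$, I can relabel the summation variable $\bm{\lambda}^* \mapsto (\II_T\otimes\UU)\bm{\lambda}^*$; since $\II_T\otimes\UU$ permutes $\Lambda^*$ this leaves the sum unchanged and gives
\begin{equation*}
\epsilon_{\Lambda}(\sqrt{\II_T\otimes\mathbf{A}}) = \sum_{\bm{\lambda}^*\in\Lambda^*} e^{-\pi^2 |\mathbf{A}|^{1/n_a}\|(\II_T\otimes\EE)\bm{\lambda}^*\|^2} - 1.
\end{equation*}

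To finish I would bound the exponent from below termwise. For any invertible $\mathbf{M}$ one has $\|\mathbf{M}\mathbf{v}\| \geq \|\mathbf{v}\|/\|\mathbf{M}^{-1}\|$, so with $\mathbf{M} = \II_T\otimes\EE$ and the spectral-norm identity $\|\II_T\otimes\EE^{-1}\| = \|\EE^{-1}\| \leq \|\EE^{-1}\|_F \leq \alpha$ recalled just before the statement, I obtain $\|(\II_T\otimes\EE)\bm{\lambda}^*\|^2 \geq \|\bm{\lambda}^*\|^2/\alpha^2$. Since each exponent is negative, this domination yields
\begin{equation*}
\epsilon_{\Lambda}(\sqrt{\II_T\otimes\mathbf{A}}) \leq \sum_{\bm{\lambda}^*\in\Lambda^*} e^{-\pi^2 (|\mathbf{A}|^{1/n_a}/\alpha^2)\|\bm{\lambda}^*\|^2} - 1 = \epsilon_{\Lambda}\!\left(|\mathbf{A}|^{1/2n_a}/\alpha\right),
\end{equation*}
which is exactly the claimed inequality. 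The main obstacle I anticipate is the bookkeeping that makes the $n_a\times n_a$ algebraic reduction act correctly on the $n_aT$-dimensional dual lattice: one must justify that the unit-determinant factor $\UU$ lifts to a map $\II_T\otimes\UU$ fixing $\Lambda^*$ (the block-wise reading of Definition \ref{def:algRed}) and confirm that dualizing is compatible with the Kronecker structure so that $\sqrt{\II_T\otimes\mathbf{A}}$ indeed factors through $\II_T\otimes\mathbf{G}$. Once these structural points are settled, the remaining estimates are routine.
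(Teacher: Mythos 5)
Your proposal is correct and follows essentially the same route as the paper's proof: both pass to the dual-lattice theta-series form of the flatness factor, normalize $\sqrt{\mathbf{A}}$ to unit determinant, absorb the factor $\UU$ of the EU decomposition into $\Lambda^*$ blockwise, and then bound the exponent via $\left\|(\II_T\otimes\EE)\bm{\lambda}^*\right\| \geq \left\|\bm{\lambda}^*\right\|/\left\|\EE^{-1}\right\|$ together with $\left\|\EE^{-1}\right\| \leq \left\|\EE^{-1}\right\|_F \leq \alpha$. If anything, you are slightly more explicit than the paper on the one structural point it glosses over, namely that the $n_a \times n_a$ reduction lifts to $\II_T \otimes \UU$ fixing the $n_aT$-dimensional dual lattice.
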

\begin{proof}
	From the Poisson summation formula and the expression for the flatness factor \eqref{flatness-dual-lattice}:
\begin{equation}
	\begin{split}
	\varepsilon_{\Lambda}(\sqrt{\II_T \otimes \mathbf{A}}) &= \sum_{\bm{\lambda} \in \Lambda^{*}\nozero} e^{-\pi^2 \bm{\lambda}^\dagger (\II_T \otimes \mathbf{A}) \bm{\lambda} } \\ &=\sum_{\bm{\lambda} \in \Lambda^{*}\nozero} e^{-\pi^2 \left\| \sqrt{\II_T \otimes \mathbf{A}} \bm{\lambda}\right\|^{2} } \\
	&= \sum_{\bm{\lambda} \in \Lambda^{*}\nozero} e^{-\pi^2 |\mathbf{A}|^{1/n_a} \left\| \frac{\sqrt{\II_T \otimes \mathbf{A}}}{\left(\sqrt{| \mathbf{A}|}\right)^{1/n_a}} \,\, \bm{\lambda}\right\|^{2} }.
	\end{split}
\end{equation}

Upon decomposing $\frac{\sqrt{\mathbf{A}}}{\left(\sqrt{|\mathbf{A}|}\right)^{1/n_a}} = \EE \UU$ as in Definition \ref{def:algRed}, the last equation becomes
\begin{equation}
	\begin{split}
	&\sum_{\bm{\lambda} \in \Lambda^{*}\nozero} e^{-\pi^2 |\mathbf{A}|^{1/n_a} \left\|(\II \otimes \EE) \bm{\lambda}\right\|^2} \stackrel{(a)}{\leq} \sum_{\bm{\lambda} \in \Lambda^{*}\nozero} e^{-\pi^2 |\mathbf{A}|^{1/n_a} \frac{\left\|\bm{\lambda}\right\|^2}{\left\|\EE^{-1}\right\|^2}} \\
	&\stackrel{(b)}{\leq} \sum_{\bm{\lambda} \in \Lambda^{*}\nozero} e^{-\pi^2 |\mathbf{A}|^{1/n_a} \frac{\left\|\bm{\lambda}\right\|^2}{\left\|\EE^{-1}\right\|_F^2}} \stackrel{(c)}{\leq} \sum_{\bm{\lambda} \in \Lambda^{*}\nozero} e^{-\pi^2 |\mathbf{A}|^{1/n_a} \frac{\left\|\bm{\lambda}\right\|^2}{\alpha^2}} \\
	&= \varepsilon_{\Lambda}\left(|{\mathbf{A}}|^{1/2n_a}/\alpha \right),
	\end{split}
\end{equation}
where $(a)$ is due to the bound $\left\| \bm{\lambda}\right\| \leq \left\|(\II \otimes \EE)^{-1}\right\| \left\|(\II \otimes \EE)\bm{\lambda}\right\|$ and the fact that $\left\|(\II \otimes \EE)^{-1}\right\| = \left\|(\II \otimes \EE^{-1})\right\|  = \left\|\EE^{-1}\right\| $, $(b)$ is due to the inequality between the $2$-norm and the Frobenius norm and $(c)$ follows from Definition \ref{def:algRed}.
\end{proof}

Since $$\varepsilon_{\mathcal{H}_e\Lambda_e^T}(\sqrt{\SSigma_3}) =  \varepsilon_{\Lambda_e^T}(\sqrt{\SSigma}),$$
where $\SSigma^{-1} = \mathcal{H}_e^\dagger \SSigma_3^{-1} \mathcal{H}_e$ and
$\SSigma_3^{-1} =  (\mathcal{H}_e\mathcal{H}_e^{\dagger})^{-1} \sigma_s^{-2} + \sigma_e^{-2} \mathbf{I}$ is block-diagonal, we can apply the above lemma.  Therefore, if we construct an ensemble of lattices such that their duals admit algebraic reduction for some constant $\alpha > 0$, then there exist lattices with vanishing flatness factor $\varepsilon_{\mathcal{H}_e \Lambda_e^T}(\sqrt{\SSigma_3})$ provided that
\begin{equation}\varepsilon_{\Lambda_e^T}\left(\sqrt{\sigma_s^2 \alpha^{-2} e^{-C_e/n_a}}\right) \to 0.
\label{eq:secrecyGoodGaussian}
\end{equation}
 This can be achieved if:
\begin{equation}V(\Lambda_e^T)^{1/n_aT} < \pi e^{-C_e/n_a} \alpha^{-2} \sigma_s^2.
\label{eq:lastexpression}
\end{equation}
Notice that the right-hand side of \eqref{eq:lastexpression} depends only on the determinant of $\SSigma_3$ or on the capacity of the eavesdropper channel, not on any individual realization. For this condition to hold, we only need a sequence of secrecy-good lattices for a surrogate eavesdropper channel with smaller noise variance (by a factor ${\alpha^{-2}}$). Therefore, by combining \eqref{eq:volb} and \eqref{eq:lastexpression}, we arrive at the the following result:

\begin{thm}
Let $(\Lambda_b^{T}, \Lambda_e^{T})$ be a sequence of nested lattices where: (i) $\Lambda_b^{T}$ is universally good for the compound MIMO channel and (ii) $\Lambda_e^{T}$ satisfies Definition \ref{def:algRed} and is secrecy good \textbf{for the AWGN channel} (Condition \eqref{eq:secrecyGoodGaussian}). Then nested lattice Gaussian coding achieves any secrecy rate up to
$$R \leq (C_b - C_e - n_a -2n_a\log (\alpha))^{+}.$$
\label{thm:acheivableRatesAlgebraic}
\end{thm}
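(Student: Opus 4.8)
The plan is to treat this as the algebraic counterpart of Theorem \ref{thm:achievableRatesFinal}: I would verify that the three sufficient conditions for semantic security identified in the proof of Theorem \ref{thm:achievableRatesFinal} (reliability, secrecy, and sub-Gaussianity of the effective noise) all hold for the assumed nested pair, and then read off the achievable rate by intersecting the resulting volume windows. Since the hypotheses already grant that $\Lambda_b^T$ is universally good for the compound channel and that $\Lambda_e^T$ is AWGN-secrecy-good with a dual admitting algebraic reduction, no separate existence argument is required here; the work is to convert these hypotheses into the two volume constraints and combine them through the nesting relation.

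The key step is the secrecy translation. Using the identity $\varepsilon_{\mathcal{H}_e\Lambda_e^T}(\sqrt{\SSigma_3}) = \varepsilon_{\Lambda_e^T}(\sqrt{\SSigma})$ with $\SSigma^{-1} = \mathcal{H}_e^\dagger\SSigma_3^{-1}\mathcal{H}_e$ block-diagonal, so that $\SSigma = \II_T\otimes\mathbf{A}$ for a suitable $\mathbf{A}$, I would apply the preceding (algebraic-reduction) lemma to bound $\varepsilon_{\Lambda_e^T}(\sqrt{\II_T\otimes\mathbf{A}}) \leq \varepsilon_{\Lambda_e^T}(|\mathbf{A}|^{1/2n_a}/\alpha)$. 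The determinant $|\mathbf{A}|$ evaluates to a quantity governed only by $C_e$ (through $|\SSigma_3|$ and $|\II + \rho_e\HH_e^\dagger\HH_e| = e^{C_e}$) and not by the individual realization, reducing the universal secrecy requirement for all $\HH_e \in \mathcal{S}_e$ to the single AWGN condition \eqref{eq:secrecyGoodGaussian}, i.e.\ to the volume bound \eqref{eq:lastexpression}, $V(\Lambda_e^T)^{1/n_aT} < \pi e^{-C_e/n_a}\alpha^{-2}\sigma_s^2$. I would also note, exactly as in the proof of Theorem \ref{thm:achievableRatesFinal}, that this secrecy condition subsumes the sub-Gaussianity/power condition $\varepsilon_{\Lambda_e^T}(\sigma_s)\to 0$, since $\sigma_s^2 \geq \sigma_s^2\alpha^{-2}e^{-C_e/n_a}$ (because $\alpha\geq 1$ and $C_e\geq 0$) and the flatness factor is monotone in the noise parameter.

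It then remains to combine \eqref{eq:lastexpression} with the reliability bound \eqref{eq:volb}, namely $V(\Lambda_b^T)^{1/n_aT} > \pi e\,\sigma_s^2 e^{-C_b/n_a}$, through the rate identity
$$R = \frac{1}{T}\log\left|\Lambda_b^T/\Lambda_e^T\right| = n_a\left(\log V(\Lambda_e^T)^{1/n_aT} - \log V(\Lambda_b^T)^{1/n_aT}\right),$$
valid because $\Lambda_e^T\subset\Lambda_b^T$. Substituting the two bounds, the $\log(\pi\sigma_s^2)$ contributions cancel, the factor $e$ in \eqref{eq:volb} yields $-n_a$, the determinant terms yield $C_b - C_e$, and the $\alpha^{-2}$ factor yields $-2n_a\log\alpha$, giving $R < C_b - C_e - n_a - 2n_a\log\alpha$; taking the positive part delivers the claim. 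I expect the only genuinely delicate point to be the determinant bookkeeping in the second step: one must check that $|\mathbf{A}|^{1/n_a}$ collapses cleanly to $\sigma_s^2\alpha^{-2}e^{-C_e/n_a}$, so that the secrecy window is realization-free and the extra gap is exactly $2n_a\log\alpha$ rather than a quantity depending on the particular $\HH_e$.
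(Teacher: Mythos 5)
Your proposal is correct and takes essentially the same route as the paper: the identity $\varepsilon_{\mathcal{H}_e\Lambda_e^T}(\sqrt{\SSigma_3}) = \varepsilon_{\Lambda_e^T}(\sqrt{\SSigma})$ with $\SSigma = \II_T\otimes\mathbf{A}$, the algebraic-reduction lemma collapsing the universal secrecy requirement to the realization-free AWGN condition \eqref{eq:secrecyGoodGaussian} (equivalently the volume bound \eqref{eq:lastexpression}, since $|\mathbf{A}|^{1/n_a} = \sigma_s^2 e^{-C_e/n_a}$ on $\mathcal{S}_e$), combined with the reliability bound \eqref{eq:volb} through the nesting index. Your explicit rate arithmetic and your check that \eqref{eq:secrecyGoodGaussian} subsumes $\varepsilon_{\Lambda_e^T}(\sigma_s)\to 0$ merely spell out steps the paper leaves implicit in ``combining \eqref{eq:volb} and \eqref{eq:lastexpression}.''
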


Notice that the gap has a different nature than the one in the previous subsection. It consists of two parts: $n_a$ due to the same restriction on the flatness factor in Theorem \ref{thm:achievableRatesFinal}, and $\log \alpha$ due to algebraic reduction. Although we have conjectured that the gap in Theorem \ref{thm:achievableRatesFinal} can be essentially removed, this is not the case for $\log \alpha$ in Theorem \ref{thm:acheivableRatesAlgebraic}. Indeed, since $\alpha$ cannot be smaller than $\sqrt{n_a}$ \cite[Theorem 3]{Our}, this gap is always larger than $n_a\log{n_a}$. However, the code construction can be reduced to the problem of finding good lattices for the Gaussian wiretap channel (with some additional algebraic structure), making the design potentially more practical.

Notice also that this strategy is closely related to the ``decoupled design" for compound MIMO channels \cite[Sect. VI]{Our}. Both strategies can indeed be combined, \emph{i.e.}, Bob's code can also benefit from algebraic reduction. In this case both the original channel decoder and the code design can be greatly simplified, at the cost of an extra gap (\emph{i.e.},, an extra factor $2n_a\log (\alpha)$) to the compound capacity.

\section{Discussion}\label{sec:discussion}

In this paper, we have presented a construction of nested lattice codes to achieve the secrecy capacity of compound MIMO wiretap channels, up to a gap equal to the number of transmit antennas.  Compared to \cite{LVL16}, the construction in this work is not only more practical, but also enjoys a smaller gap. With algebraic reduction, further simplification has been made, at the cost of an extra gap to the secrecy capacity. Interestingly, the algebraic approach reaffirms the important role of the dual lattice of $\Lambda_e^T$ in wiretap channels, firstly discovered in \cite{LVL16}.

\paragraph{Encoding and decoding} Encoding and decoding are not much different from those of lattice codes for compound MIMO channels in \cite{Our}. The generalized Construction A employed in this paper may be viewed as a concatenated code, where the inner code is a lattice with some desired properties, while the outer code is an error correction code. Therefore, decoding can be run successively, which greatly reduces the decoding complexity. As for encoding, the discrete Gaussian shaping can be facilitated by choosing a nice base lattice, \emph{e.g.}, a rotated $\mathbb{Z}^{n_a}$ lattice whose Gaussian shaping is easy. There are highly efficient algorithms for Gaussian shaping over specific lattices \cite{Antonio16}, but more research is needed for Gaussian shaping over generalized Construction A. Practical implementation of the proposed codes is left as future work.

\paragraph{Comparison to other compound models} When the channel $\HH_b$ is known and the eavesdropper channel has bounded norm, \cite{SL15} has shown that the eavesdropper's worst channel is also isotropic. In this case the capacity can be achieved by decomposing the channels into different independent substreams with appropriate power, and applying independent coding for the Gaussian channel. This is also the case when $\HH_b$ has a linear uncertainty. In these cases, a combination of correct power allocation and a similar argument to Lemma \ref{lem:universallyFlatGaussians} shows that semantic secrecy is also achievable by random lattice codes. On the other hand, the algebraic approach (Theorem \ref{thm:acheivableRatesAlgebraic}) heavily relies on the fact that the channels in $\mathcal{S}_e$ have the same white-input mutual information.

\paragraph{Finite-length performance}

The results of this work are based on asymptotic analysis as $T\to \infty$. The practical performance of the proposed universal codes at finite block lengths warrants an investigation. In particular, how large $T$ is required to approach the promised gap in practice? For given $T$, how far do practical codes perform from secrecy capacity? It may be a challenging problem to design good, practical universal codes.

As a further perspective, one may consider an ``outage'' analysis of the MIMO wiretap channel in a finite blocklength regime, where the channel matrices $\HH_b$ and $\HH_e$ may be random. In other words, one may analyze the probability that the code rate $R$ exceeds the secrecy capacity. In such scenarios, we believe that lattices with the non-vanishing determinant property will be able to provide universal bounds for the outage probability. We leave it as an open problem.

\appendices

\section{Quantization of Channel Space}
\label{app:2}
In this appendix we show bounds on the flatness factor in the quantized channel space, formalizing part (iii) in the proof of Lemma \ref{lem:universallyFlatGaussians}. Instead of performing the quantization directly in the eavesdropper space $\mathcal{S}_b$, we will consider the corresponding covariance matrices. Following the notation of Lemma \ref{lem:universallyFlatGaussians}, we have:
$$\varepsilon_{\mathcal{H}_e\Lambda_e^T}(\sqrt{\SSigma_3}) =  \varepsilon_{\Lambda_e^T}(\sqrt{\SSigma}),$$
where $\SSigma^{-1} = \mathcal{H}_e^\dagger \SSigma_3^{-1} \mathcal{H}_e$ and
$\SSigma_3^{-1} =  (\mathcal{H}_e\mathcal{H}_e^{\dagger})^{-1} \sigma_s^{-2} + \sigma_e^{-2} \mathbf{I}$. Let $\Omega_e$ be the space of co-variance matrices of the form $\SSigma$, where $\HH_e$ can be any matrix in the space of eavesdropper matrices $\mathcal{S}_b$:
\begin{equation*}
\Omega_e = \left\{ \SSigma = (\mathcal{H}_e^\dagger \SSigma_3^{-1} \mathcal{H}_e)^{-1} : \mathcal{H}_e = \II_T\otimes\HH_e, \HH_e \in \mathcal{S}_e  \right\}.
\end{equation*}
By using the definition of the flatness factor, we can show the following:

\begin{lemma}
	Let $\SSigma, \bar{\SSigma} \in \Omega_e$ be two matrices satisfying $\left\|\mathbf{\SSigma}-\overline{\mathbf{\SSigma}}\right\| \leq \delta$. If $\delta$ is sufficiently small, then $\overline{\SSigma} - \delta \II$ is positive-definite and
	$$\varepsilon_{\Lambda_e^T}(\sqrt{\SSigma}) \leq \varepsilon_{\Lambda_e^T}(\sqrt{\overline{\SSigma} - \delta \II}).$$
\end{lemma}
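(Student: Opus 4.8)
The plan is to reduce the statement entirely to the monotonicity of the flatness factor with respect to the positive-semidefinite (Loewner) order, which was recorded in Section~\ref{sec:correlated}: if $\SSigma_1 \succeq \SSigma_2 \succ 0$, then $\varepsilon_{\Lambda}(\sqrt{\SSigma_1}) \le \varepsilon_{\Lambda}(\sqrt{\SSigma_2})$. This is immediate from the dual representation $\varepsilon_{\Lambda}(\sqrt{\SSigma}) = \sum_{\bm{\lambda} \in \Lambda^{*}\nozero} e^{-\pi^2 \bm{\lambda}^\dagger \SSigma \bm{\lambda}}$, since $\SSigma_1 \succeq \SSigma_2$ forces $\bm{\lambda}^\dagger \SSigma_1 \bm{\lambda} \ge \bm{\lambda}^\dagger \SSigma_2 \bm{\lambda}$ term by term, making each summand smaller. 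Granting this, it suffices to prove the single matrix inequality $\SSigma \succeq \overline{\SSigma} - \delta\II$ together with the positive-definiteness of $\overline{\SSigma} - \delta\II$.

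First I would convert the spectral-norm hypothesis into a Loewner bound. Because $\SSigma$ and $\overline{\SSigma}$ are Hermitian, $\|\SSigma - \overline{\SSigma}\| \le \delta$ says precisely that every eigenvalue of $\SSigma - \overline{\SSigma}$ lies in $[-\delta,\delta]$; in particular $\SSigma - \overline{\SSigma} \succeq -\delta\II$, i.e. $\SSigma \succeq \overline{\SSigma} - \delta\II$. This is the only place where the hypothesis enters.

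Next I would establish positive-definiteness, and I would do it \emph{uniformly} over $\Omega_e$ so that the resulting bound is directly usable in the quantization step of Lemma~\ref{lem:universallyFlatGaussians}. In the square case a short computation collapses the defining expression: $\SSigma^{-1} = \mathcal{H}_e^\dagger \SSigma_3^{-1}\mathcal{H}_e = \sigma_s^{-2}\II + \sigma_e^{-2}\mathcal{H}_e^\dagger\mathcal{H}_e$, using $\mathcal{H}_e^\dagger(\mathcal{H}_e\mathcal{H}_e^\dagger)^{-1}\mathcal{H}_e = \II$. Hence the eigenvalues of $\SSigma$ are $(\sigma_s^{-2}+\sigma_e^{-2}\mu_i)^{-1}$, where the $\mu_i \ge 0$ are the eigenvalues of $\HH_e^\dagger\HH_e$. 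The compactness of $\mathcal{S}_e$ caps these: from $\prod_i(1+\rho_e\mu_i) = e^{C_e}$ with every factor at least $1$, one gets $\mu_i \le (e^{C_e}-1)/\rho_e$. Therefore $\lambda_{\min}(\SSigma) \ge \lambda_0 := (\sigma_s^{-2}+\sigma_e^{-2}(e^{C_e}-1)/\rho_e)^{-1} > 0$ for every $\SSigma \in \Omega_e$, and any choice $\delta < \lambda_0$ makes $\overline{\SSigma} - \delta\II \succ 0$ for all $\overline{\SSigma}\in\Omega_e$ simultaneously.

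Finally I would assemble the pieces: with $\overline{\SSigma}-\delta\II \succ 0$ and $\SSigma \succeq \overline{\SSigma}-\delta\II$, monotonicity gives $\varepsilon_{\Lambda_e^T}(\sqrt{\SSigma}) \le \varepsilon_{\Lambda_e^T}(\sqrt{\overline{\SSigma}-\delta\II})$, which is the claim. I do not expect a genuine obstacle here; the only delicate point is the \emph{uniform} positivity, where compactness of $\mathcal{S}_e$ (equivalently, the white-input determinant constraint bounding the eigenvalues of $\HH_e^\dagger\HH_e$) does the work. Everything else is the one-line passage from a spectral-norm bound to a Loewner inequality combined with the already-established monotonicity of the flatness factor.
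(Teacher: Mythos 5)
Your proof is correct and in substance identical to the paper's: the paper's argument is precisely the term-by-term dual theta-series bound $e^{-\pi^2 \bm{\lambda}^\dagger \SSigma \bm{\lambda}} \leq e^{\pi^2 \delta \left\|\bm{\lambda}\right\|^2} e^{-\pi^2 \bm{\lambda}^\dagger \overline{\SSigma} \bm{\lambda}}$, which is your Loewner inequality $\SSigma \succeq \overline{\SSigma} - \delta \II$ applied summand by summand, so routing it through the recorded monotonicity of $\varepsilon_{\Lambda}(\sqrt{\SSigma})$ is only a repackaging. Your explicit uniform lower bound $\lambda_{\min}(\SSigma) \geq \left(\sigma_s^{-2} + \sigma_e^{-2}(e^{C_e}-1)/\rho_e\right)^{-1}$ over $\Omega_e$ is a nice quantification of the paper's unproved phrase ``if $\delta$ is sufficiently small,'' and it correctly shows $\delta$ can be chosen independently of the realization, as the quantization step requires.
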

\begin{proof}
	For any $\bm{\lambda} \in \mathbb{C}^{n_aT}$ we have $|\bm{\lambda}^\dagger (\SSigma-\overline{\SSigma}) \bm{\lambda}| \leq \left\|\bm{\lambda}\right\|^2 \delta.$ Therefore
	\begin{equation*}
	\begin{split}
	\varepsilon_{\Lambda_e^T}(\sqrt{\SSigma}) &= \sum_{\bm{\lambda} \in \Lambda^{*}\nozero} e^{-\pi^2 \bm{\lambda}^\dagger \SSigma \bm{\lambda} } \\ &= \sum_{\bm{\lambda} \in \Lambda^{*}\nozero} e^{-\pi^2 \bm{\lambda}^\dagger\left(\SSigma-\overline{\SSigma}\right) \bm{\lambda} }e^{-\pi^2 \bm{\lambda}^\dagger\overline{\SSigma} \bm{\lambda} }
	\\ &\leq \sum_{\bm{\lambda} \in \Lambda^{*}\nozero} e^{\pi^2 \delta  \bm{\lambda}^\dagger \bm{\lambda} }e^{-\pi^2 \bm{\lambda}^\dagger\overline{\SSigma} \bm{\lambda} } = \varepsilon_{\Lambda_e^T}(\sqrt{\overline{\SSigma} - \delta \II}).
	\end{split}
	\end{equation*}
	
\end{proof}
Suppose now that $\mathcal{S}_{\delta}$ is a $\delta$-quantizer for $\Omega_e$ with cardinality $Q_{\delta}$, \emph{i.e.}, for all $\SSigma$ there exists $\overline{\SSigma} \in \mathcal{S}_{\delta}$ such that $\left\|\mathbf{\SSigma}-\overline{\mathbf{\SSigma}}\right\| \leq \delta$. For any $\SSigma$ we have:
\begin{equation*}
\begin{split}
\mathbb{E}[\varepsilon_{\Lambda_e^T}(\sqrt{\SSigma})] &\leq \mathbb{E}[\varepsilon_{\Lambda_e^T}(\sqrt{\overline{\SSigma}-\delta\II})] \leq
\mathbb{E}[\max_{\overline{\SSigma} \in \mathcal{S}_{\delta}}\varepsilon_{\Lambda_e^T}(\sqrt{\overline{\SSigma}-\delta \II})] \\ &\leq
\mathbb{E}[\sum_{\overline{\SSigma} \in \mathcal{S}_{\delta}}\varepsilon_{\Lambda_e^T}(\sqrt{\overline{\SSigma}-\delta \II})] \\
&\leq Q_{\delta} (1+\varepsilon_T)\left(\frac{\gamma_{\Lambda_e^T}(\sqrt{\overline{\SSigma}-\delta\II})}{\pi}\right)^{n_aT} \\ &= Q_{\delta}(1+\varepsilon_T) f(\delta) \left(\frac{\gamma_{\Lambda_e^T}(\sqrt{{\SSigma}})}{\pi}\right)^{n_aT},
\end{split}
\end{equation*}
where
$$f(\delta) = \frac{|\SSigma|}{|\SSigma-\delta \II|} = \frac{1}{|\II - \delta \SSigma^{-1}|}.$$

The last upper bound is universal, in the sense that it does not depend on the specific realization $\HH_e$. Note that if the VNR condition is satisfied, namely $\gamma_{\Lambda_e^T}(\sqrt{{\SSigma}}) < \pi$, then the term $(\gamma_{\Lambda_e^T}(\sqrt{{\SSigma}})/\pi)^{n_aT}$ decays exponentially in $T$ with exponent given by
$$c_1 = -n_a \log (\gamma_{\Lambda_e^T}(\sqrt{{\SSigma}})/\pi).$$
From this, we obtain the bound
 \begin{equation}
 \begin{split}
 \mathbb{E}[\varepsilon_{\Lambda_e^T}(\sqrt{\SSigma})] &\leq (1+\varepsilon_T)Q_{\delta} e^{-T \left(c_1 - \frac{\log f(\delta)}{T}\right)} \\ &= (1+\varepsilon_T)Q_{\delta} e^{-T \left(c_1 - {\log |\II -\delta(\sigma_{e}^{-2}\HH_e^\dagger \HH_e + \sigma_s^{-2} \II) | }\right)},
 \end{split}
 \label{eq:boundAverageFlatnessFactor}
 \end{equation}
 which holds for any $\SSigma \in \Omega_e$.
We can therefore choose a small $\delta$ (independently of $T$) such that the total exponent is negative. Since $Q_{\delta}$ does not depend on $T$, and $\varepsilon_T$ can be made arbitrarily small, we obtain an exponential decay of the flatness factor.

\section{Simultaneous Goodness}
\label{ref:appSimultaneous}
From Section \ref{sec:universallyFlat}, the construction of universally secure codes boils down to finding a sequence of pairs of nested lattices $\Lambda_{b}^{T} \subset \Lambda_{e}^{T}$ such that
\begin{itemize}
\item $\Lambda_{b}^{T}$ has vanishing probability of error: $\mathbb{P}_{\Lambda_b}(\RR_b) \triangleq \mathbb{P}(\tilde{\WW}_{b,\eff} \notin \mathcal{V}(\RR_b\Lambda_b^T)) \to 0$ as $T \to \infty$;
\item $\Lambda_{e}^T$ has vanishing flatness factor: $\varepsilon_{\Lambda_e^T}(\sqrt{\SSigma}) \to 0$ as $T \to \infty$,
\end{itemize}
where we recall that $\tilde{\WW}_{b,\eff}$ is the effective noise, sub-Gaussian with co-variance matrix $\sigma_b^2 \II$, $\RR_b^\dagger \RR_b = \HH_b^\dagger \HH_b + \rho_b^{-1} \II$, and
$$\SSigma^{-1} = \mathcal{H}_e^\dagger \SSigma_3^{-1} \mathcal{H}_e, \mbox{with } \SSigma_3^{-1} =  (\mathcal{H}_e\mathcal{H}_e^{\dagger})^{-1} \sigma_s^{-2} + \sigma_e^{-2} \mathbf{I}.$$
First suppose that $\mathbf{R}_b$ and $\SSigma$ are fixed.
Let $\Lambda_b^T = \Lambda(\mathcal{C}_b)$ be obtained by choosing $\mathcal{C}_b$ uniformly in the set of all codes with parameters $(T,k_b,p)$. Let $\Lambda_e^T = \Lambda(\mathcal{C}_e)$ be obtained by expurgating $k_b-k_e$ columns from $\mathcal{C}_b$. With this process $\mathcal{C}_e$ will be also chosen uniformly from all $(T,k_e,p)$ codes. We have:
\begin{equation}\label{eq:simult-good}
\begin{split}&\mathbb{E}_{\mathcal{C}_b}[\max\left\{\mathbb{P}_{\Lambda_b^T}(\RR_b),\varepsilon_{\Lambda_e^T}(\sqrt{\SSigma})\right\}] \\ &\leq  \mathbb{E}_{\mathcal{C}_b}[\mathbb{P}_{\Lambda_b^T}(\RR_b) +\varepsilon_{\Lambda_e^T}(\sqrt{\SSigma}) ]\\ &= \mathbb{E}_{\mathcal{C}_b}[\mathbb{P}_{\Lambda_b^T}(\RR_b)] +\mathbb{E}_{\mathcal{C}_e}[\varepsilon_{\Lambda_e^T}(\sqrt{\SSigma}) ] \to 0.
\end{split}
\end{equation}
Convergence of both terms in the last equation is guaranteed to be exponentially fast. Indeed:

\begin{itemize}
	\item The term $\mathbb{E}_{\mathcal{C}_b}[\mathbb{P}_{\Lambda_b^T}(\RR_b)]$ tends to zero exponentially provided that $\gamma_{\RR_b}(\Lambda_b) > \pi e$, due to AWGN-goodness of $\Lambda_b^T$.
	\item The term $\mathbb{E}_{\mathcal{C}_e}[\varepsilon_{\Lambda_e^T}(\sqrt{\SSigma}) ]$ tends to zero exponentially provided that $\gamma_{\Lambda_e^T}(\sqrt{\SSigma}) \to 0$, due to Appendix \ref{app:2}, Equation \eqref{eq:boundAverageFlatnessFactor}.
\end{itemize}

Furthermore, by considering the quantized channel spaces, similarly to Appendix \ref{app:2}, we conclude that the convergence is universal. Therefore, there exists a pair of lattices ${\Lambda_b^T,\Lambda_e^T}$ where $\Lambda_b^T$ is universally AWGN-good and $\Lambda_e^T$ is universally secrecy-good, and Theorem \ref{thm:achievableRatesFinal} follows.

\begin{rmk}
Although the above argument only demonstrates the existence of a pair of good lattices, it is possible to show a concentration result on the performance of the ensemble of nested lattices. Suppose some exponential bound $e^{-c T}$ on  \eqref{eq:simult-good} for some $c > 0$. Then, using Markov's inequality, we have that for the ensemble of nested lattices considered,
\begin{equation}
\mathbb{P}(\mathbb{P}_{\Lambda_b^T}(\RR_b) +\varepsilon_{\Lambda_e^T}(\sqrt{\SSigma}) > e^{-(c-c') T}) < e^{-c' T}, \ \ \forall  \ 0<c'<c.
\end{equation}
That is, with probability higher than $1-e^{-c' T}$ over the choice of $\mathcal{C}_b$, \eqref{eq:simult-good} stays below $e^{-(c-c') T}$. In other words, most of these nested lattices have a performance concentrating around
$e^{-c T}$.

\end{rmk}

\section*{Acknowledgment}


The authors would like to thank Laura Luzzi and Roope Vehkalahti for helpful discussions.

\bibliographystyle{IEEEtran}
\bibliography{block_fading}

\begin{thebibliography}{10}
\providecommand{\url}[1]{#1}
\csname url@samestyle\endcsname
\providecommand{\newblock}{\relax}
\providecommand{\bibinfo}[2]{#2}
\providecommand{\BIBentrySTDinterwordspacing}{\spaceskip=0pt\relax}
\providecommand{\BIBentryALTinterwordstretchfactor}{4}
\providecommand{\BIBentryALTinterwordspacing}{\spaceskip=\fontdimen2\font plus
\BIBentryALTinterwordstretchfactor\fontdimen3\font minus
  \fontdimen4\font\relax}
\providecommand{\BIBforeignlanguage}[2]{{%
\expandafter\ifx\csname l@#1\endcsname\relax
\typeout{** WARNING: IEEEtran.bst: No hyphenation pattern has been}%
\typeout{** loaded for the language `#1'. Using the pattern for}%
\typeout{** the default language instead.}%
\else
\language=\csname l@#1\endcsname
\fi
#2}}
\providecommand{\BIBdecl}{\relax}
\BIBdecl

\bibitem{Wyner75}
A.~D. Wyner, ``The wire-tap channel,'' \emph{Bell System Technical Journal},
  vol.~54, pp. 1355--1387, Oct. 1975.

\bibitem{BBRM08}
M.~Bloch, J.~Barros, M.~R.~D. Rodrigues, and S.~W. McLaughlin, ``Wireless
  information-theoretic security,'' \emph{IEEE Trans. Inf. Theory}, vol.~54,
  no.~6, pp. 2515--2534, June 2008.

\bibitem{Bloch_Barros_2011}
M.~Bloch and J.~Barros, \emph{Physical {L}ayer {S}ecurity: {F}rom {I}nformation
  {T}heory to {S}ecurity {E}ngineering}.\hskip 1em plus 0.5em minus 0.4em\relax
  Cambridge University Press, 2011.

\bibitem{Liang_Poor_Shamai_2009}
Y.~Liang, H.~Poor, and S.~Shamai, \emph{Information Theoretic Security}.\hskip
  1em plus 0.5em minus 0.4em\relax Foundations and Trends in Communications and
  Information Theory, Now Publishers, 2009.

\bibitem{polarsecrecy}
H.~Mahdavifar and A.~Vardy, ``Achieving the secrecy capacity of wiretap
  channels using polar codes,'' \emph{IEEE Trans. Inf. Theory}, vol.~57,
  no.~10, pp. 6428--6443, Oct. 2011.

\bibitem{BargWiretap}
T.~C. Gulcu and A.~Barg, ``Achieving secrecy capacity of the wiretap channel
  and broadcast channel with a confidential component,'' \emph{IEEE Trans.
  Inform. Theory}, vol.~63, no.~2, pp. 1311--1324, Feb. 2017.

\bibitem{GeneralWiretap}
Y.-P. Wei and S.~Ulukus, ``Polar coding for the general wiretap channel,'' in
  \emph{Proc. 2015 IEEE Inform. Theory Workshop}, Jerusalem, Israel, April
  2015, pp. 1--5.

\bibitem{Tyagi15}
H.~Tyagi and A.~Vardy, ``Universal hashing for information-theoretic
  security,'' \emph{Proc. IEEE}, vol. 103, no.~10, pp. 1781--1795, Oct. 2015.

\bibitem{Bellare2012}
M.~Bellare, S.~Tessaro, and A.~Vardy, ``Semantic security for the wiretap
  channel,'' in \emph{Proc. CRYPTO 2012}, ser. Lecture Notes in Computer
  Science, vol. 7417.\hskip 1em plus 0.5em minus 0.4em\relax Springer-Verlag,
  pp. 294--311.

\bibitem{OSB}
F.~Oggier, P.~Sol\'e, and J.-C. Belfiore, ``Lattice codes for the wiretap
  {Gaussian} channel: Construction and analysis,'' \emph{IEEE Trans. Inf.
  Theory}, vol.~62, no.~10, pp. 5690--5708, Oct. 2016.

\bibitem{LLBS_12}
C.~Ling, L.~Luzzi, J.-C. Belfiore, and D.~Stehl\'{e}, ``Semantically secure
  lattice codes for the {G}aussian wiretap channel,'' \emph{IEEE Trans. Inf.
  Theory}, vol.~60, no.~10, pp. 6399--6416, Oct. 2014.

\bibitem{BR06}
J.~Barros and M.~R.~D. Rodrigues, ``Secrecy capacity of wireless channels,'' in
  \emph{2006 IEEE International Symposium on Information Theory}, July 2006,
  pp. 356--360.

\bibitem{LPS07}
Y.~Liang, H.~V. Poor, and S.~Shamai, ``Secrecy capacity region of fading
  broadcast channels,'' in \emph{2007 IEEE International Symposium on
  Information Theory}, June 2007, pp. 1291--1295.

\bibitem{OH11}
F.~Oggier and B.~Hassibi, ``The secrecy capacity of the {MIMO} wiretap
  channel,'' \emph{IEEE Trans. Inf. Theory}, vol.~57, no.~8, pp. 4961--4972,
  Aug. 2011.

\bibitem{Khisti10}
A.~Khisti and G.~W. Wornell, ``Secure transmission with multiple
  antennas--{Part II: The MIMOME} wiretap channel,'' \emph{IEEE Trans. Inf.
  Theory}, vol.~56, no.~11, pp. 5515--5532, Nov. 2010.

\bibitem{LiuShamai09}
T.~Liu and S.~Shamai, ``A note on the secrecy capacity of the multiple-antenna
  wiretap channel,'' \emph{IEEE Trans. Inf. Theory}, vol.~55, no.~6, pp.
  2547--2553, June 2009.

\bibitem{Loyka16Wiretap}
S.~Loyka and C.~D. Charalambous, ``Optimal signaling for secure communications
  over {Gaussian MIMO} wiretap channels,'' \emph{IEEE Trans. Inf. Theory},
  vol.~62, no.~12, pp. 7207--7215, Dec. 2016.

\bibitem{KhinaKK15}
A.~Khina, Y.~Kochman, and A.~Khisti, ``The {MIMO} wiretap channel decomposed,''
  \emph{IEEE Trans. Inf. Theory}, vol.~64, no.~2, pp. 1046--1063, Feb. 2018.

\bibitem{He14}
X.~He and A.~Yener, ``{MIMO} wiretap channels with unknown and varying
  eavesdropper channel states,'' \emph{IEEE Trans. Inf. Theory}, vol.~60,
  no.~11, pp. 6844--6869, Nov. 2014.

\bibitem{SL15}
R.~F. Schaefer and S.~Loyka, ``The secrecy capacity of compound {Gaussian MIMO}
  wiretap channels,'' \emph{IEEE Tran. Inf. Theory}, vol.~61, no.~10, pp.
  5535--5552, 2015.

\bibitem{Liang09}
Y.~Liang, G.~Kramer, H.~Poor, and S.~Shamai, ``Compound wiretap channels,''
  \emph{Eurasip Journal on Wireless Communications and Networking}, vol. 2009,
  2009.

\bibitem{Bjelakovic2013}
I.~Bjelakovi{\'{c}}, H.~Boche, and J.~Sommerfeld, ``Secrecy results for
  compound wiretap channels,'' \emph{Problems of Information Transmission},
  vol.~49, no.~1, pp. 73--98, 2013.

\bibitem{Khisti11}
A.~Khisti, ``Interference alignment for the multiantenna compound wiretap
  channel,'' \emph{IEEE Trans. Inf. Theory}, vol.~57, no.~5, pp. 2976--2993,
  May 2011.

\bibitem{BO_TComm}
J.-C. Belfiore and F.~Oggier, ``An error probability approach to {MIMO} wiretap
  channels,'' \emph{IEEE Trans. Commun.}, vol.~61, no.~8, pp. 3396--3403,
  August 2013.

\bibitem{KHHV}
D.~Karpuk, A.-M. Ernvall-Hyt\"onen, C.~Hollanti, and E.~Viterbo, ``Probability
  estimates for fading and wiretap channels from ideal class {Zeta}
  functions,'' \emph{Advances in Mathematics of Communication}, vol.~9, no.~4,
  pp. 391--413, 2015.

\bibitem{KOO15}
W.~Kositwattanarerk, S.~S. Ong, and F.~Oggier, ``Construction a of lattices
  over number fields and block fading (wiretap) coding,'' \emph{IEEE Trans.
  Inf. Theory}, vol.~61, no.~5, pp. 2273--2282, May 2015.

\bibitem{Hamed}
H.~Mirghasemi and J.-C. Belfiore, ``Lattice code design criterion for {MIMO}
  wiretap channels,'' in \emph{IEEE Information Theory Workshop (ITW)}, 2015.

\bibitem{LVL16}
L.~Luzzi, R.~Vehkalahti, and C.~Ling, ``Almost universal codes for {MIMO}
  wiretap channels,'' \emph{IEEE Trans. Inf. Theory}, vol.~64, no.~11, pp.
  7218--7241, Nov 2018.

\bibitem{Our}
A.~Campello, C.~Ling, and J.-C. Belfiore, ``Universal lattice codes for {MIMO}
  channels,'' \emph{IEEE Trans. Inf. Theory}, vol.~64, no.~12, pp. 7847--7865,
  Dec 2018.

\bibitem{RootVarayia1968}
W.~L. Root and P.~P. Varaiya, ``Capacity of classes of {Gaussian} channels,''
  \emph{SIAM Journal on Applied Mathematics}, vol.~16, no.~6, pp. 1350--1393,
  1968.

\bibitem{Loyka16Compound}
S.~Loyka and C.~D. Charalambous, ``A general formula for compound channel
  capacity,'' \emph{IEEE Trans. Inf. Theory}, vol.~62, no.~7, pp. 3971--3991,
  July 2016.

\bibitem{GhayaViterboJC}
G.~Rekaya, J.-C. Belfiore, and E.~Viterbo, ``{A very efficient lattice
  reduction tool on fast fading channels},'' in \emph{Proceedings of the
  Internation Symposium on Information Theory and its Applications (ISITA),
  Parma, Italy}, 2004.

\bibitem{LuzziGoldenCode}
L.~Luzzi, G.~R.-B. Othman, and J.-C. Belfiore, ``Algebraic reduction for the
  {Golden} code,'' \emph{Advances in Mathematics of Communications}, vol.~6,
  no.~1, pp. 1--26, 2012.

\bibitem{LingISTC16}
C.~Ling, ``Achieving capacity and security in wireless communications with
  lattice codes,'' in \emph{2016 9th International Symposium on Turbo Codes and
  Iterative Information Processing (ISTC)}, Sept. 2016, pp. 171--175.

\bibitem{CLB-IZS18}
A.~Campello, C.~Ling, and J.-C. Belfiore, ``Semantically secure lattice codes
  for compound {MIMO} channels,'' in \emph{2018 International Zurich Seminar on
  Information and Communication (IZS)}, Feb. 2018, pp. 171--175.

\bibitem{ShiWesel07}
J.~Shi and R.~Wesel, ``A study on universal codes with finite block lengths,''
  \emph{IEEE Trans. Inf. Theory}, vol.~53, no.~9, pp. 3066--3074, Sept. 2007.

\bibitem{HeshamElGamal04}
H.~El~Gamal, G.~Caire, and M.~Damen, ``Lattice coding and decoding achieve the
  optimal diversity-multiplexing tradeoff of {MIMO} channels,'' \emph{IEEE
  Trans. Inf. Theory}, vol.~50, no.~6, pp. 968--985, June 2004.

\bibitem{Ordentlich15}
O.~Ordentlich and U.~Erez, ``Precoded integer-forcing universally achieves the
  {MIMO} capacity to within a constant gap,'' \emph{IEEE Trans. Inf. Theory},
  vol.~61, no.~1, pp. 323--340, Jan. 2015.

\bibitem{MR07}
D.~Micciancio and O.~Regev, ``Worst-case to average-case reductions based on
  {Gaussian} measures,'' \emph{SIAM J. Comput.}, vol.~37, no.~1, pp. 267--302,
  Apr. 2007.

\bibitem{HB14}
H.~Mirghasemi and J.-C. Belfiore, ``{The semantic secrecy rate of the lattice
  Gaussian coding for the Gaussian wiretap channel},'' in \emph{2014 IEEE
  Information Theory Workshop (ITW 2014)}, Nov 2014, pp. 112--116.

\bibitem{Loeliger}
H.-A. Loeliger, ``Averaging bounds for lattices and linear codes,'' \emph{IEEE
  Trans. Inf. Theory}, vol.~43, no.~6, pp. 1767--1773, Nov. 1997.

\bibitem{Campello17}
A.~Campello, ``Random ensembles of lattices from generalized reductions,''
  \emph{IEEE Trans. Inf. Theory}, vol.~64, no.~7, pp. 5231--5239, July 2018.

\bibitem{Antonio16}
A.~Campello and J.-C. Belfiore, ``Sampling algorithms for lattice {Gaussian}
  codes,'' in \emph{24th International Zurich Seminar on Communications (IZS)},
  Zurich, Switzerland, March 2016, pp. 165--169.

\end{thebibliography}

\end{document}